\newcommand{\be}{\begin{eqnarray}}
\newcommand{\ee}{\end{eqnarray}}
\newcommand{\eeq}{\end{equation}}
\newcommand{\beq}{\begin{equation}}
\allowdisplaybreaks \numberwithin{equation}{section}
\DeclareSymbolFont{AMSa}{U}{msa}{m}{n}
\DeclareSymbolFont{AMSb}{U}{msb}{m}{n}
\DeclareMathSymbol{\fieldR}{\mathalpha}{AMSb}{"52}
\DeclareMathOperator{\rank}{rank}
\newcommand{\CH}{\mathcal{H}}
\newcommand{\CN}{\mathcal{N}}
\DeclareMathOperator{\Tr}{Tr}
\newcommand{\NN}{\mathbb{N}}
\newcommand{\ZZ}{\mathbb{Z}}
\newcommand{\RR}{\mathbb{R}}
\newcommand{\CC}{\mathbb{C}}
\newcommand{\HH}{\mathbb{H}}
\newcommand{\g}{\mathfrak{g}}
\newcommand{\h}{\mathfrak{h}}
\newcommand{\Pg}{\mathfrak{P}}
\newcommand{\Qg}{\mathfrak{Q}}
\newcommand{\hreg}{\mathsf{h}}
\newcommand{\rootQ}{\mathit{Q}}
\newcommand{\dcox}{h^\vee} 
\newcommand{\ex}{\operatorname{e}}
\def\beq{\begin{equation}}
\def\eeq{\end{equation}}
\def\bea{\begin{eqnarray}}
\def\eea{\end{eqnarray}}
\def\<{\langle}
\newtheorem{theorem}{Theorem}
\newtheorem{lemma}{Lemma}
\newtheorem{proposition}{Proposition}
\title{Fun with \boldmath{$F_{24}$}}
\author[1, 2]{Sarah M. Harrison}
\author[3, 4]{Natalie M. Paquette}
\author[5]{Daniel Persson}
\author[6]{Roberto Volpato}
 \affiliation[1]{\small Department of Mathematics and Statistics, McGill University, Montreal, QC, Canada}
\affiliation[2]{\small Department of Physics, McGill University, Montreal, QC, Canada}
\affiliation[3]{\small Walter Burke Institute for Theoretical Physics, California Institute of Technology,
Pasadena, CA 91125, USA}
\affiliation[4]{\small School of Natural Sciences, Institute for Advanced Study, Princeton, NJ, 08540, USA}
\affiliation[5]{Department of Mathematical Sciences, Chalmers University of Technology, Gothenburg, Sweden}
\affiliation[6]{\small Dipartimento di Fisica e Astronomia `Galileo Galilei', Universit\`a di Padova \& INFN, sez. di Padova, Via Marzolo 8, 35131, Padova, Italy}
\emailAdd{sarharr@physics.mcgill.ca}
\emailAdd{paquette@ias.edu}
\emailAdd{danper@chalmers.se}
\emailAdd{volpato@pd.infn.it}
 \abstract{\noindent We study some special features of $F_{24}$, the holomorphic $c=12$ superconformal field theory (SCFT) given by 24 chiral free fermions. We construct eight different Lie superalgebras of ``physical'' states of a chiral superstring compactified on $F_{24}$, and we prove that they all have the structure of Borcherds-Kac-Moody superalgebras. This produces a family of new examples of such superalgebras. The models depend on the choice of an $\CN=1$ supercurrent on $F_{24}$, with the admissible choices labeled by the semisimple Lie algebras of dimension 24. We also discuss how $F_{24}$, with any such choice of supercurrent, can be obtained via orbifolding from another distinguished $c=12$ holomorphic SCFT, the $\CN=1$  supersymmetric version of the chiral CFT based on the $E_8$ lattice.  }
\begin{document}
\maketitle


\section{Introduction}

Quantum field theories of two dimensional fermions are among the simplest to write down, and nevertheless have remarkably rich physics. For example, perhaps the simplest 2d conformal field theory (CFT), the critical point of the 2d Ising model, can be described in terms of a free Majorana fermion. More elaborate fermionic CFTs appear as edge modes in quantum Hall systems \cite{Tong:2016kpv}, as well as in the classification of symmetry-protected topological (SPT) phases \cite{Senthil:2014ooa}. Two-dimensional Bose-Fermi duality also relates fermionic CFTs to lattice CFTs of bosons, which appear in, e.g., toroidal compactifications of string theory. Recent works on dualities have also shed light on subtle discrete invariants required to understand the rich physics of two dimensional fermions \cite{KTTW, KT, KarchTongTurner}. 

In this paper, we will rather emphasize intricate Lie algebraic structures hidden in the deceptively simple physics of 2d free fermions \footnote{Interesting work in a similar spirit appeared recently in \cite{Troost}, which studied quantum mechanical fermions valued in (gauged) Lie algebras.}. As has been emphasized in several corners of the mathematical physics landscape (see, to give an incomplete list, \cite{GaiottoJFWitten, GaiottoJF, TongTurner} and references therein), systems with certain distinguished numbers of fermions can enjoy special properties and intertwine with several species of modular objects; our interest will be in various symmetry structures present in a system of 24 chiral fermions and some associated automorphic forms.

In this paper we study a system of 24 free chiral fermions in 2d. This is a holomorphic superconformal field theory, or super-vertex operator algebra (SVOA), with central charge 12 and which we refer to throughout as $F_{24}$. It is notable because it is one of three so-called self-dual SVOAs with central charge 12.\footnote{A self-dual SVOA W is one that is rational and the unique irreducible W-module (up to isomorphism.)} These theories were classified in \cite{CDR} and are given by (up to isomorphism):
\begin{enumerate}
	\item $V^{fE_8}$: This is the theory of 8 chiral bosons compactified on $\mathbb R^8/\Lambda_{E_8}$, where $\Lambda_{E_8}$ is the $E_8$ root lattice, and their 8 fermionic superpartners. 
	\item $V^{f\natural}$: This is the unique holomorphic SCFT with $c=12$ and no weight-1/2 fields. First discussed in \cite{Duncan}, it has a unique $\mathcal N=1$ superconformal structure which is stabilized by Conway's largest sporadic group.
	\item $F_{24}$: This is a theory of 24 free chiral fermions. One can build an $\mathcal N=1$ superconformal structure by taking a linear combination of cubic Fermi terms, and the allowed choices are classified by semisimple Lie algebras of dimension 24. Each of these generates an affine Kac-Moody algebra, of which there are eight possibilities:
	\begin{align}(\widehat{su}(2)_2)^{\oplus 8}\ ,\quad (\widehat{su}(3)_3)^{\oplus 3}\ ,\quad \widehat{su}(4)_4\oplus (\widehat{su}(2)_2)^{\oplus 3}\ ,\quad \widehat{su}(5)_5\ ,\quad  \widehat{so}(5)_3\oplus \hat g_{2,4}\ ,\notag\\
\widehat{so}(5)_3\oplus \widehat{su}(3)_3\oplus (\widehat{su}(2)_2)^{\oplus 2}\ ,\quad \widehat{so}(7)_5\oplus \widehat{su}(2)_2\ ,\quad \widehat{sp}(6)_4\oplus \widehat{su}(2)_2\ ,\notag
\end{align}
	which we describe in \S \ref{s:F24}.
\end{enumerate}

On the face of it, these three theories are quite different--they have notably different constructions and symmetry groups. However, as is described in  \cite{Duncan, Duncan:2014eha,CDR,ACDV} and \S \ref{s:F24fromOb}, by gauging symmetries, one can move from one to the other. Furthermore, in \cite{Sch1} and  \cite{Harrison:2018joy}, respectively, the theories $V^{fE_8}$ and $V^{f\natural}$ have been used to furnish constructions of a special type of infinite-dimensional Lie superalgebra known as a Borcherds-Kac-Moody (BKM) superalgebra. BKM algebras were originally introduced by Borcherds \cite{BorcherdsMM} in his proof of the monstrous moonshine conjectures of Conway and Norton \cite{CN}, and Thompson \cite{Thompson1,Thompson2}.\footnote{For reviews of moonshine, see, e.g.,   \cite{Duncan:2014vfa,Anagiannis:2018jqf}.} The monster BKM arises from BRST quantization of a chiral bosonic string theory and elucidates connections between modular functions, the monster sporadic simple group, and the physics of 2d CFT.

One of the goals of the present paper is to describe the construction of a family of BKM superalgebras based on the theory $F_{24}$,  similar to the constructions of BKM superalgebras based on $V^{fE_8}$ and $V^{f\natural}$ mentioned above. For each choice of $\mathcal N=1$ superconformal  structure for $F_{24}$ we construct a corresponding BKM superalgebra $\g$ with Kac-Moody symmetry determined by the choice of $\mathcal N=1$ supercurrent. The main results of this work are threefold: 
\begin{itemize}
\item We show that all choices of $\CN=1$ structure on $F_{24}$ can be obtained from orbifolds of the SVOA $V^{fE_8}$ (see \S\ref{s:F24fromOb}),
\item We prove that the Lie superalgebra $\g$ satisfies the conditions of a BKM superalgebra (see Theorem \ref{mainthm} in \S \ref{sec:LiePhysStates}),
\item We provide an infinite product formula for the Borcherds-Weyl-Kac denominator for each $\g$ (see \S\ref{sec:denom}). 
\end{itemize}

Besides the fact that we construct a new family of examples of  (super)-BKM algebras, of which there are only very few explicit constructions, one of our long-term interests is to elucidate the connection between BKM algebras and BPS states in string theory,  which was originally envisaged by Harvey and Moore \cite{HM1,HM2}. They suggested that BPS states in string and field theories with extended supersymmetry should form an algebra, and that---at least in some contexts---this algebra would be a generalized\footnote{That is, a Borcherds-Kac-Moody algebra.} Kac-Moody algebra (or contain one as a subalgebra.) 

An interesting example of this proposal, similar in spirit to the present study, was studied by the last three named authors \cite{PPV1, PPV2}, where it was  found that \emph{spacetime} BPS states in a second quantized heterotic string theory furnished a natural module over the Monster BKM.\footnote{See also \cite{Cheng:2008fc} for a proposal for the appearance of a BKM algebra in  string theory in a quite different context, building on the pioneering results of \cite{DVV}.} The worldsheet string theory for this construction employed the Monster vertex operator algebra $V^{\natural}$, and the construction has been used to shed light on the physical interpretation of the genus zero property of monstrous moonshine. 
Similarly, the theories  $V^{fE_8},V^{f\natural},F_{24}$ all naturally occur as (chiral halves) of worldsheet CFTs at special points in the moduli space of maximally supersymmetric type II string compactifications to 2d. We expect that the BKM superalgebras constructed in \cite{Sch1}, \cite{Harrison:2018joy}, and this paper, occur as algebras acting on spacetime BPS states at such special points. This is a notion we make precise in upcoming work \cite{HPPV}. Furthermore, given the close relation between $V^{fE_8},V^{fE_8},F_{24}$ via orbifolding, by analogy with the constructions in \cite{Persson:2015jka,Persson:2017lkn,PPV1} we expect that we may uncover new 2d spacetime string dualities by considering worldsheet theories which are a tensor product $V \otimes \bar W$ with $V,W$ taken to be one of these theories. 

The outline of the rest of the paper is as follows. In \S \ref{s:F24}, we review the construction of $F_{24}$, its canonically--twisted module $F_{24}^{tw}$, the allowed choices of $\mathcal N=1$ supercurrent, and its character. In \S \ref{s:F24fromOb}, we describe how $F_{24}$ with a choice of $\mathcal N=1$ structure can be obtained from orbifolds of the SVOA $V^{fE_8}$. In \S\ref{s:BKM} we explain the general construction of BKM superalgebras from $\mathcal{N}=1$ SVOAs. We then go on to construct a family of BKM superalgebras $\mathfrak{g}$ from $F_{24}$ with a choice of $\mathcal N=1$ superconformal structure. In \S\ref{sec:BKMproof} we prove our main theorem, showing that $\mathfrak{g}$ is a super BKM-algebra. In this section we also discuss the denominator and super denominator formulas of $\mathfrak{g}$. We give more details on the example with $\hat A_1^8$ Kac-Moody symmetry in the following \S \ref{sec:A1^8}. We conclude with a brief discussion of open questions in \S \ref{sec:discussion}. Finally, two appendices provide further details about multivariable Jacobi forms (\S \ref{a:multiJacobi}) and the relative BRST cohomology for physical states in our theories (\S \ref{a:details}).

\section{The SVOA and its $\CN=1$ structures}\label{s:F24}

\subsection{Construction}
The starting point of our construction is a simple holomorphic chiral superconformal field theory (SCFT) $F_{24}$ of central charge $c=12$, given by $24$ chiral free fermions. In mathematical language, this is a self-dual $C_2$-cofinite  super vertex operator algebra (SVOA) of CFT type of central charge $12$. In our definition of SCFT (or SVOA) we do not include the choice of an $\CN=1$ subalgebra. For this reason, we refer to $F_{24}$ as a single SVOA, even though, as discussed below, it admits different $\CN=1$ structures.

This theory is generated by $24$ chiral free fermions, $\lambda^1(z),\ldots,\lambda^{24}(z)$, with OPE
\be \lambda^i(z)\lambda^j(w)\sim \frac{\delta^{ij}}{z-w},
\ee
and stress-energy tensor,
\be\label{stress} T(z)=-\frac{1}{2}\sum_{i=1}^{24} :\!\lambda^i\partial \lambda^i\!:\!(z)\ ,
\ee with respect to which $\lambda^i$ have conformal weight $1/2$. The $\binom{24}{2}=276$ currents $\lambda_i\lambda_j$, $1\le i<j\le 24$, generate an $\widehat{so}(24)_1$ Kac-Moody algebra, which is  the bosonic (even) subVOA of $F_{24}$. By bosonization, the same SVOA can be described as a lattice model based on the odd unimodular lattice $\ZZ^{12}$, with the $\widehat{so}(24)_1$ algebra corresponding to the $D_{12}$ sublattice. The space of odd (fermionic) states in $F_{24}$ forms a vector module for the $\widehat {so}(24)_1$ algebra.

$\CN=1$ structures in free fermion theories were classified in \cite{Goddard:1984hg}. An $\CN=1$ supercurrent must be a linear combination of conformal primaries of weight $3/2$, so it must be of the form
\be\label{freefermG} G= -\frac{i}{6}\sum_{i,j,k} c_{ijk} :\lambda^i\lambda^j\lambda^k:\ 
\ee for some totally antisymmetric $c_{ijk}\in \CC$. The stress tensor \eqref{stress} and the supercurrent $G(z)$ defined by \eqref{freefermG} generate an $\CN=1$ superconformal algebra at central charge $12$ if and only if the following conditions are satisfied \cite{Goddard:1984hg}:
\be\label{algJacobi} \sum_k(c_{ijk}c_{klm}+c_{lik}c_{kjm}+c_{jlk}c_{kim})=0\ ,
\ee
\be\label{algnorm}  \sum_{k,l} c_{ikl}c_{jkl}=2\delta_{ij}\ .
\ee The first condition is equivalent to the requirement that the $G(z)G(0)$ OPE does not contain any singular term with four fermions $\lambda\lambda\lambda\lambda$; the second is equivalent to the requirement that the $z^{-1}$ term in the OPE reproduces the stress-energy tensor $T(z)$. 

These conditions imply that $c_{ijk}$ are the structure constants of a semisimple Lie algebra $g$ of dimension $24$ (and any rank), i.e. $g$ is the complex Lie algebra generated by $t^1,\ldots, t^{24}$ with commutation relations
\be [t^j,t^k]=ic_{jkl} t^l\ .
\ee
Given a choice of $c_{ijk}$ satisfying \eqref{algJacobi} and \eqref{algnorm}, the $24$ currents
\be J^a(z)=-\frac{i}{2}\sum_{j,k=1}^n c_{ajk} :\lambda_j\lambda_k:(z)\ ,\qquad a=1,\ldots, 24\ ,
\ee satisfy the OPE 
\be\label{currentOPE} J^a(z)J^b(0)=\frac{\delta_{ab}}{z^2}+\frac{i}{z}\sum_{k=1}^n c_{abk} J^k(0)+\ldots,
\ee
 which shows that the $J^a(z)$ generate an affine Kac-Moody algebra $\hat g$ based on the finite Lie algebra $g$. Thus, $g$ is a $24$-dimensional subalgebra of the $so(24)$ algebra generated by the zero modes of fermion bilinears $\lambda^i\lambda^j$.
The OPEs
\be G(z)\lambda^a(0)=\frac{J^a(0)}{z}+\ldots, ~~~{\rm and}
\ee
\be J^a(z)\lambda^b(0)=\frac{i}{z}\sum_{k=1}^nc_{abk}\lambda^k(0)+\ldots
\ee
show that the currents $J^a(z)$ are singled out as the $\CN=1$ descendants of the $24$ free fermions $\lambda^a$, while the remaining $252$ currents are superconformal primaries. Furthermore, the free fermions $\lambda^a$ transform in the adjoint representation with respect to the algebra $g$ of zero modes.

The SVOA admits a canonical non-degenerate invariant bilinear form $(\cdot |\cdot)$, given by the Zamolodchikov metric. By \eqref{currentOPE}, the set of currents $\{J^a\}_{a=1,\ldots,24}$ is orthonormal with respect to this bilinear form.  In the following sections, we will need to choose a  normalization for the Cartan-Killing form $(\cdot |\cdot)_g$ on the finite dimensional Lie algebra $g$. It is convenient to choose
 \be\label{Killing} (t|u)_g=\frac{1}{2}\Tr({\rm Ad}(t){\rm Ad}(u))\ .
 \ee With this choice, the Cartan-Killing form $(\cdot|\cdot)_g$ on the algebra of zero modes $J^a_0$ coincides with the bilinear form induced by the Zamolodchikov metric, since
 \be (J^a_0|J^b_0)_g=\frac{1}{2}\Tr({\rm Ad}(J^a_0){\rm Ad}(J^b_0))=-\frac{1}{2}\sum_{j,k=1}^n c_{ajk}c_{bkj}=\delta_{ab}\ ,
 \ee where we used \eqref{algnorm} and that ${\rm Ad}(J^a_0)_{jk}=-ic_{ajk}$ in the basis $\{J^a_0\}$. In the following, we will often drop the subscript $g$ on the Killing form. Notice that if $g$ is a direct sum of simple components $g=\oplus_i g_i$, where $g_i$ has  dual Coxeter number $\dcox_{g_i}$, then the restriction of the Killing form to $g_i$ is such that the long roots have square length $2/\dcox_{g_i}$. 
 
In terms of modes $J^a(z)=\sum_{n\in \ZZ} J^a_n z^{-n-1}$, the OPE \eqref{currentOPE} yields the commutation relations
\be [J^a_n,J^b_m]= n\delta^{ab}\delta_{m,-n}+i\sum_c c_{abc}J^c_{m+n}\ .
\ee We recall that, for an affine Kac-Moody algebra $\hat g'_k$ based on a \emph{simple} algebra $g'$ at level $k$, the commutation relations read
\be [t_n,u_m]=k\frac{(t|u)_{g'}}{\dcox_{g'}}\delta^{ab}\delta_{m,-n}+({\rm Ad}(t).u)_{n+m}\ ,
\ee  when the Killing form is normalized as in \eqref{Killing}.\footnote{More generally, the relations are  $$[t_n,u_m]=k\frac{|\theta|^2}{2}(t|u)\delta^{ab}\delta_{m,-n}+({\rm Ad}(t).u)_{n+m}\ ,$$ where $|\theta|^2$ is the length of the long roots. With the choice \eqref{Killing} for the normalization, one has $|\theta|^2=\frac{2}{\dcox_{g'}}$, hence the formula.} Comparing these equations, we see that if $g$ is the sum $g=\oplus_i g_i$ of simple components of dual Coxeter number $\dcox_{g_i}$, the affine algebra $\hat g$ is given by 
\be \hat g=\oplus_i (\hat g_i)_{\dcox_{g_i}}\ ,\ee 
i.e. the levels of the simple components equal the dual Coxeter numbers.\footnote{The same conclusion can be reached by noticing that, for a simple algebra $g'$, the dual Coxeter number is the embedding index of $g'\subset so(\dim g')$. The embedding index is the ratio of the levels for the corresponding embedding of affine algebras. Since $\dim g'$ fermions generate an algebra $\hat{so}(\dim g')_1$ at level $1$, we have that $\hat g'$ must have level $\dcox_{g'}$.}

 For $\dim g=24$, there are eight distinct possibilities for $g$:
\be\label{N1algebras} A_1^8\ ,\quad A_2^3\ ,\quad A_3A_1^3\ ,\quad A_4\ ,\quad  B_2G_2\ ,\quad B_2A_2A_1^2\ ,\quad B_3A_1\ ,\quad C_3A_1\ .
\ee
The corresponding affine algebras are,
\be\label{N1algebras3} \hat A_{1,2}^8\, ,\quad \hat A_{2,3}^3\, ,\quad \hat A_{3,4}\hat A_{1,2}^3\, ,\quad \hat A_{4,5}\, ,\quad  \hat B_{2,3}\hat G_{2,4}\, ,\quad \hat B_{2,3}\hat A_{2,3}\hat A_{1,2}^2\, ,\quad \hat B_{3,5}\hat A_{1,2}\, ,\quad \hat C_{3,4}\hat A_{1,2}\, ,
\ee that is,
\begin{align}\label{N1algebras2} (\widehat{su}(2)_2)^{\oplus 8}\ ,\quad (\widehat{su}(3)_3)^{\oplus 3}\ ,\quad \widehat{su}(4)_4\oplus (\widehat{su}(2)_2)^{\oplus 3}\ ,\quad \widehat{su}(5)_5\ ,\quad  \widehat{so}(5)_3\oplus \hat g_{2,4}\ ,\\
\widehat{so}(5)_3\oplus \widehat{su}(3)_3\oplus (\widehat{su}(2)_2)^{\oplus 2}\ ,\quad \widehat{so}(7)_5\oplus \widehat{su}(2)_2\ ,\quad \widehat{sp}(6)_4\oplus \widehat{su}(2)_2\ .\notag
\end{align}

Finally, this SVOA admits a unique (up to isomorphism) canonically twisted module $F_{24}^{tw}$, which also admits an invariant non-degenerate bilinear form $(\cdot |\cdot)$. Using the string theory terminology, we will often refer to the SVOA $F_{24}$ as the Neveu-Schwarz (NS) sector and to its twisted module as the Ramond (R) sector. Recall that the even subalgebra of $F_{24}$ is the bosonic lattice VOA  $V_{D_{12}}$ based on the $D_{12}$ lattice. This VOA $V_{D_{12}}$ has four irreducible modules which are in one-to-one correspondence with the cosets $D_{12}^*/D_{12}$. We can label the four modules as `adjoint', `vector', `spinor' and `conjugate spinor' in terms of their $so(24)$ representations. While $F_{24}$ is given by the direct sum of the adjoint and vector $V_{D_{12}}$-modules, the  canonically twisted module can be identified with the direct sum of the two spinor $V_{D_{12}}$-modules, with opposite fermion number. This description immediately shows that the lowest conformal weight in the Ramond sector is $3/2$, and in particular there are no states of weight $1/2$. For any choice of the $\CN=1$ supercurrent $G(z)$, the relation $G_0^2=L_0-\frac{1}{2}$ implies that the zero mode $G_0$ has zero kernel in the Ramond sector, and therefore establishes an isomorphism between the components with positive and negative fermion number.

\subsection{Partition functions}\label{s:partfunct}

In this section, we compute the partition functions of the SVOA $F_{24}$ (NS sector) and its canonically twisted module $F_{24}^{tw}$ (R sector).

First we describe our notation. Let us choose a Cartan subalgebra $h$ of the Lie algebra $g$ and let $g=g_-\oplus h\oplus g_+$ be a triangular decomposition.  Let  $\Delta^+\subset h^*\cong \CC^r$ be the set of positive roots, where $r=\rank(g)$ is the rank of $g$, $\alpha_1,\ldots, \alpha_r\in h^*$ be the simple roots, and $\alpha^\vee_1,\ldots, \alpha^\vee_r\in h$  be the coroots. 
 We normalize the Killing form $(\cdot|\cdot)_g$ as in \eqref{Killing} so that the long roots in each simple component $g'$ of $g$ have  length-squared  $2/\dcox_{g'}$, where $\dcox_{g'}$ is the dual Coxeter number of $g'$.  
 We denote by $\rootQ_g=\sum_i\ZZ \alpha_i\subset h^*$ the root lattice, by $\rootQ_g^\vee=\sum_i\ZZ \alpha_i^\vee \subset h$ the coroot lattice and by
\be P_g\equiv (\rootQ_g^\vee)^*=\{w\in h^*\mid w(\alpha_i^\vee)\in \ZZ \ ,\forall i\}
\ee its dual lattice (the weight lattice).\footnote{Occasionally, when it is clear from context which $g$ we are studying, we will omit the subscript on the root, coroot and weight lattices.} The Killing form \eqref{algnorm} defines an isomorphism $i:h\to h^*$, which we often keep implicit, simply identifying $h$ and $h^*$. With this Killing form, the coroot lattice $Q_g^\vee$ is even, so that $i(Q^\vee_g)\subset P_g$.\footnote{We notice that the isomorphism $i$ depends on the normalization of the Killing form, which is not completely standard (long roots do not have length $2$, but $2/\dcox_g$). For example, for $su(2)$, the root has length $2/2=1$, so that one can identify $Q_g=\ZZ$, $P_g=\frac{1}{2}\ZZ$ and $i(Q^\vee_g)=2\ZZ$.}
 
We can choose the basis vectors of the SVOA to be simultaneous eigenstates of $L_0$ and of the Cartan generators $(\alpha^\vee_1,\ldots,\alpha^\vee_r)$ of $g$. The $r$-tuple of eigenvalues (the charges) for $(\alpha^\vee_1,\ldots,\alpha^\vee_r)$ is a weight $w\equiv (w_1,\ldots,w_r)\in P_g$. To keep track of both the $L_0$ and $h$ eigenvalues, we introduce characters depending on $\tau\in \HH$ and on $\xi\in h\cong \CC^r$, namely:
\begin{align}
&\phi_{NS}(\tau,\xi)=Tr_{NS}(q^{L_0-\frac{c}{24}} e^{2\pi i \xi})=\sum_{n\in \frac{1}{2}\ZZ}\sum_{w\in P_g}c_{NS}(n,w)q^ne^{2\pi i ( \xi| w)}\ ,\\
&\phi_{\widetilde {NS}}(\tau,\xi)=Tr_{NS}(q^{L_0-\frac{c}{24}} e^{2\pi i \xi}(-1)^F)=\sum_{n\in \frac{1}{2}\ZZ}\sum_{w\in P_g}(-1)^{2n+1}c_{\widetilde{NS}}(n,w)q^ne^{2\pi i (\xi| w)}\ ,\\
&\phi_{R}(\tau,\xi)=Tr_{R}(q^{L_0-\frac{c}{24}}  e^{2\pi i \xi})=\sum_{n\in \ZZ}\sum_{w\in P_g}c_{R}(n,w)q^ne^{2\pi i ( \xi|w)}\ ,\\
&\phi_{\tilde R}(\tau,\xi)=Tr_{R}(q^{L_0-\frac{c}{24}}  e^{2\pi i \xi}(-1)^F)=\sum_{n\in \ZZ}\sum_{w\in P_g}c_{\tilde R}(n,w)q^ne^{2\pi i (\xi| w)}\ .
\end{align}
A direct calculation then gives (here $\rho=\frac{1}{2}\sum_{\alpha\in\Delta^+}\alpha\in  P_g$ denotes the Weyl vector of $g$):
\begin{align} \phi_{\widetilde{NS}}(\tau,\xi)&=q^{-1/2} \prod_{n=1}^\infty \left[(1-q^{n-\frac{1}{2}})^r\prod_{\alpha\in \Delta_+} (1-q^{n-\frac{1}{2}}e^{2\pi i (\xi|\alpha)})(1-q^{n-\frac{1}{2}}e^{-2\pi i (\xi|\alpha)})\right]\notag\\
&= \frac{\theta_4(\tau,0)^{r/2}\prod_{\alpha\in \Delta_+}\theta_4(\tau,(\xi|\alpha))}{\eta(\tau)^{12}}\ \end{align}
\begin{align} \phi_{NS}(\tau,\xi)&=q^{-1/2} \prod_{n=1}^\infty \left [(1+q^{n-\frac{1}{2}})^r\prod_{\alpha\in \Delta_+} (1+q^{n-\frac{1}{2}}e^{2\pi i (\xi|\alpha)})(1+q^{n-\frac{1}{2}}e^{-2\pi i (\xi|\alpha)})\right]\notag\\
&= \frac{\theta_3(\tau,0)^{r/2}\prod_{\alpha\in \Delta_+}\theta_3(\tau,(\xi|\alpha))}{\eta(\tau)^{12}}\ \end{align}
\begin{align} \phi_{R}(\tau,\xi)&=e^{-2\pi i (\xi|\rho)}2^\frac{r}{2}q\prod_{m=1}^\infty (1+q^m)^r\prod_{\alpha\in \Delta_+} \left [\prod_{n=0}^\infty(1+q^ne^{2\pi i (\xi|\alpha)})\prod_{k=1}^\infty(1+q^ke^{-2\pi i (\xi|\alpha)})\right]\notag\\
&= \frac{\theta_2(\tau,0)^{r/2}\prod_{\alpha\in \Delta_+}\theta_2(\tau,(\xi|\alpha))}{\eta(\tau)^{12}}=2^\frac{r}{2}\frac{\eta(2\tau)^r\prod_{\alpha\in \Delta_+}\theta_2(\tau,(\xi|\alpha))}{\eta(\tau)^{12+r/2}}\ ,\label{JacRam}\end{align}
and
\be \phi_{\tilde R}(\tau,\xi)=0\ .
\ee  See appendix \ref{a:multiJacobi} for the definition of the Jacobi theta functions. 

The last equality follows because, as discussed above, $\ker G_0=0$ so that the Ramond spaces with positive and negative fermion number are isomorphic. In this computation, we use the fact that the Ramond ground states form a $2^{12}$-dimensional representation of the algebra $g$, which is isomorphic to the direct sum of $2^{r/2}$ copies of the representation $L_\rho$ whose highest weight is the Weyl vector $\rho$. In particular, $\dim L_\rho=2^N$, with $N=(24-r)/2$ being the number of positive roots. To show this, we first notice that the space of ground states forms an irreducible module for the Clifford algebra of fermionic zero modes. Using this description, it is easy to check that the difference between the highest and the lowest weights in this representation is the sum over the positive roots, i.e. $2\rho$, and that the multiplicity of either the lowest or highest weight is $2^{r/2}$: the space of highest weight vectors is itself a module for the Clifford subalgebra of $r$ fermionic zero modes corresponding to the Cartan generators of $g$. Finally, this $g$-representation must be self-conjugate, because the canonically twisted module of $F_{24}$ is unique, so it must be isomorphic to its dual. Thus, the highest weight must be the opposite of the lowest, and therefore equal to $\rho$. 

In the following, we also need the linear combinations
\be\label{Jaceven} \phi_{NS\pm }(\tau,\xi)=Tr_{NS}(q^{L_0-\frac{c}{24}} e^{2\pi i \xi}\frac{1\pm (-1)^F}{2})=\frac{1}{2}(\phi_{\widetilde{NS}}(\tau,\xi)\pm \phi_{NS}(\tau,\xi))
\ee and
\be\label{Jacodd} \phi_{R\pm}(\tau,\xi)=Tr_{R}(q^{L_0-\frac{c}{24}} e^{2\pi i \xi}\frac{1\pm(-1)^F}{2})=\frac{1}{2}\phi_{R}(\tau,\xi)\ ,
\ee giving the partition functions on the  eigenspaces of the fermion number.

As shown in appendix \ref{a:multiJacobi}, these functions admit a Fourier expansion
\be\label{JacobiFourier} \phi_X(\tau,\xi)=\sum_{n}\sum_{w\in P_g}c_X(n,w)q^ne^{2\pi i ( \xi|w)}\ ,\ee
 where $X\in \{NS,\widetilde{NS},R,\tilde{R},NS\pm,R\pm\}$, and the sum over $n$ is over integers in the Ramond sector or when $X=NS-$, and over half-integers in all other cases. The sum over $w$ can be reduced to a sum over the root lattice $\rootQ_g\subset P_g$ in the NS sector, or over the coset $\rho+\rootQ_g\subset P_g$ in the R sector. More generally, the sum over $w$ can be restricted to
 \be\label{rooti} \tilde Q_g:=Q_g\cup (\rho+Q_g)\ .
 \ee

In appendix \ref{a:multiJacobi} we show that the coefficients $c_X(n,w)$ depend on $n$ and $w\in P_g$ only through the discriminant
\be D=2n-(w|w)_g
\ee and on the class $[w]$ of $w$ in $P_g/i(\rootQ^\vee_g)$; we will sometimes write $c_X(n,w)\equiv c_X(D,[w])$ to emphasize this dependence. 

In particular, when $X\in \{NS-,R\pm\}$, the coefficients $c_X(n,w)$ are non-zero only when
\be \label{coeffboundA} c_X(n,w)\neq 0\qquad \Rightarrow\qquad \begin{cases}n\ge 0\\ 2n-(w|w)_g\ge -m([w])\ ,\end{cases}
\ee where \be m([w])=\min\{(w'|w')\mid w'\in w+i(\rootQ^\vee_g)\}\ee
is the minimal square length of vectors in the coset $w+i(\rootQ^\vee_g)$. Since there are only a finite number of cosets in $P_g/i(\rootQ^\vee_g)$, we can also give a bound on the discriminant that is independent of the class of $w$,
\be \label{coeffboundB} c_X(n,w)\neq 0\qquad \Rightarrow\qquad \begin{cases}n\ge 0\\ 2n-(w|w)_g\ge -M\ ,\end{cases} 
\ee where 
\be\label{Mbound} M=\max_{[w]\in P_g/i(\rootQ^\vee_g)}m([w])\ .
\ee Finally, for $n=0$, one has
\be c_{NS-}(0,w)=\begin{cases}
	1 & \text{if $w$ is a root of $g$}\\
	r & \text{if $w=0$}\\
	0 & \text{otherwise}
\end{cases}
\ee while $c_{R\pm}(0,w)=0$ for all $w\in P_g$, given the absence of Ramond states of weight $1/2$.

The coefficients $c_{NS-}$ and $c_{R\pm}$ will correspond to the (respectively, even and odd) root multiplicities for the BKM superalgebra that we will construct in \S \ref{s:BKM}.

If we set all $\xi=0$, we get the same formulas for all choices of $\CN=1$ structure, namely
\be \phi_{ NS}(\tau,0)=\Tr_{NS}(q^{L_0-\frac{c}{24}})=-\frac{\eta(\frac{\tau+1}{2})^{24}}{\eta(\tau)^{24}}=q^{1/2}+24+276q^{1/2}+2048q+\ldots
\ee
\be \phi_{\widetilde{NS}}(\tau,0)=\Tr_{NS}((-1)^F q^{L_0-\frac{c}{24}})=\frac{\eta(\tau/2)^{24}}{\eta(\tau)^{24}}=q^{1/2}-24+276q^{1/2}-2048q+\ldots
\ee
\be \phi_{R}(\tau,0)=\Tr_{R}(q^{L_0-\frac{c}{24}})=2^{12}\frac{\eta(2\tau)^{24}}{\eta(\tau)^{24}}=4096q+98304q^2+\ldots
\ee
\be \phi_{\tilde R}(\tau,0)=\Tr_{R}((-1)^Fq^{L_0-\frac{c}{24}})=0\ .
\ee
In particular,
\be \phi_{NS-}(\tau,0)=24+2048q+49152q^2+\ldots
\ee and
\be \phi_{R+}(\tau,0)=\phi_{R-}(\tau,0)=2048q+49152q^2+\ldots
\ee showing that $\phi_{NS-}(\tau,0)=\phi_{R\pm}(\tau,0)+24$.

Having established some basic properties of the $F_{24}$ SVOA, we will next explain how one can obtain this theory, including the choice of $\CN=1$ structure, via orbifolds of $V^{fE_8}$. Though we do not undertake a full string theoretic construction in this work (though see \cite{HPPV}), the orbifolds discussed in the next section will be a precursor to various spacetime dualities relating string theories with different perturbative worldsheet descriptions based upon the $c=12$ self-dual SVOAs. 

\section{$F_{24}$ from orbifolds of $V^{fE_8}$}\label{s:F24fromOb}

In \cite{CDR} it was shown that the SVOA $F_{24}$ can be obtained from the SVOAs $V^{fE_8}$ or $V^{f\natural}$ by an orbifold by a cyclic group of symmetries. However, in both cases, this group of symmetries did not preserve the $\CN=1$ supercurrent of $V^{fE_8}$ or $V^{f\natural}$. As a consequence, when $F_{24}$ is constructed in this way, there is no $\CN=1$ superconformal structure inherited from the original SVOA. This raises the question whether $F_{24}$ with a given choice of $\CN=1$ superconformal structure can be obtained from $V^{fE_8}$ or $V^{f\natural}$ by an orbifold procedure, where the group of symmetries we quotient by preserves the superconformal current of $V^{fE_8}$ or $V^{f\natural}$, and the $\CN=1$ structure on $F_{24}$ is exactly the one induced by the parent theory. In this section, we will show that  all choices of $\CN=1$ structure on $F_{24}$ can be obtained from the $V^{fE_8}$ SVOA.

This result is interesting in view of the correspondence between chiral vertex operator superalgebras and non-chiral $\CN=(4, 4)$ supersymmetric nonlinear sigma models proposed in \cite{ACDV,derived,Cheng:2015kha,TW,CDR}. In particular, in \cite{ACDV} it was shown that there is a certain set of supersymmetry preserving automorphisms of $V^{fE_8}$ that are closely related to symmetries of supersymmetric sigma models on $T^4$, and that the orbifold of $V^{fE_8}$ by any such automorphism is either the SVOA $V^{f\natural}$ or $V^{fE_8}$. Similar relationships between the chiral SVOA $V^{s\natural}$ with $c=12$ (essentially $V^{f\natural}$) and the non-chiral $\CN=(4,4)$ K3 sigma models  with $c=\bar{c}=6$ have also been explored in previous works \cite{derived,Cheng:2015kha}; in \cite{TW,CDR}, it was shown that they can be related via a certain reflection procedure. Here, we show that $V^{fE_8}$ admits some further $\CN=1$ preserving automorphisms for which the orbifold is $F_{24}$. It would be very interesting to understand what the meaning of this result is on the sigma model side of the correspondence.

\subsection{Generalities}
Let us show that each choice of $\CN=1$ structure on $F_{24}$ can be obtained from an orbifold of the $E_8$ SVOA $V^{fE_8}$, with its standard $\CN=1$ structure, by a symmetry that commutes with the $\CN=1$ supercurrent.  Let $\psi^i$, $\partial X^i$, $i=1,\ldots,8$, and $V_\lambda$, $\lambda\in E_8$ be the fields of weights $1/2$, $1$, and $\lambda^2/2$, respectively, generating $V^{fE_8}$. The standard $\CN=1$ supercurrent is $G\sim :\sum_i \psi^i\partial X^i:$ (up to normalization). 

Let us first consider the case of $F_{24}$ with the $\CN=1$ structure corresponding to $g=A_1^8$.  We use a description of the $E_8$ lattice as $E_8=D_8 \cup (\chi+D_8)$, where
\be D_8=\{(x_1,\ldots,x_8)\in \ZZ^8\mid \sum_{i=1}^8 x_i\in 2\ZZ\}\ ,
\ee and $\chi=(1/2,1/2,\ldots,1/2)\in \RR^8$, so that
\be \chi+D_8=\{(x_1,\ldots,x_8)\in (\frac{1}{2}+\ZZ)^8\mid \sum_{i=1}^8 x_i\in 2\ZZ\}\ .
\ee We note that $\chi+D_8$ is one of the four cosets in $D_8^*/D_8$. Consider a symmetry $\delta$ of $V^{fE_8}$ that acts trivially on $\psi^i$, $\partial X^i$, $i=1,\ldots, 8$, and $V_{\lambda}$, for all $\lambda\in D_8$ but acts by $V_{\lambda}\mapsto -V_{\lambda}$ for $\lambda\in \chi+D_8$. This is a symmetry of order $2$ of the SVOA acting trivially on the supercurrent $G$. The group $\langle \delta\rangle\cong \ZZ_2$ is a subgroup of a $U(1)^8$ group of symmetries which preserves $\partial X^i$, $\psi^i$ (and therefore preserves the supercurrent) and acts by $V_\lambda\mapsto e^{2\pi i \alpha\cdot \lambda}V_\lambda$ for some $\alpha\in (E_8\otimes \RR)/E_8$. In particular, $\delta$ corresponds to $\alpha=(1,0,0,\ldots, 0)\in (E_8\otimes \RR)/E_8$. 

The orbifold of $V^{fE_8}$ by $\langle \delta\rangle$ is again an $\CN=1$ SVOA, with the supercurrent $G$ inherited from the parent theory. The $\delta$-invariant subalgebra $(V^{fE_8})^\delta$ is a supersymmetric lattice SVOA based on the lattice $D_8$. The $\delta$-invariant $\delta$-twisted sector is the $(V^{fE_8})^\delta$-module corresponding to the coset $\alpha+D_8$, another of the four cosets of $D_8^*/D_8$. Since $D_8\cup (\alpha+D_8)\cong \ZZ_8$, it turns out that the orbifold can be described as a lattice SVOA based on the odd unimodular lattice $\ZZ_8$, together with the $8$ fermions $\psi^i$. To check that this is actually the same as the SVOA generated by $24$ free fermions, notice that there are $24$ fields of weight $1/2$, namely $\psi^i$ and $V_{\pm e_i}$, $i=1,\ldots, 8$, where $\{e_i\}$, with $e_i=(0,\ldots,0,1,0,\ldots,0)\in \ZZ^8$, is the standard basis of $\ZZ_8$. Furthermore, since $:V_{e_i}V_{-e_i}:\sim \partial X^i$, these $24$ fields of weight $1/2$ generate the whole SVOA. Set $\lambda^i:=\psi^i$,  $\lambda^{8+i}:=V_{e_i}$, $\lambda^{16+i}:=V_{-e_i}$, $i=1,\ldots, 8$. Then, the supercurrent $G$ can be written (up to normalization) as
\be G\sim \sum_{i=1}^8 :\psi^i\partial X^i:\sim \sum_{i=1}^8 :\psi^iV_{e_i}V_{-e_i}:=\sum_{i=1}^8 :\lambda^i\lambda^{8+i}\lambda^{16+i}:\ ,
\ee which is of the form \eqref{freefermG} with $c_{ijk}$ the structure constants of $A_1^8$.

There are similar orbifolds of $V^{fE_8}$ giving all the other $\CN=1$ structures on $F_{24}$. In order to describe them, it is easier to implement the procedure in reverse, i.e. to find a cyclic group of symmetries of $F_{24}$ which preserves a given $\CN=1$ supercurrent and such that the orbifold theory is isomorphic to $V^{fE_8}$ with its $\CN=1$ structure. Then, one uses the fact that orbifolds by cyclic groups are `invertible'. This means that if a CFT $B$ is obtained from the CFT $A$ via an orbifold by a cyclic group $\langle \delta \rangle$, then the CFT $B$ has a `quantum symmetry' $Q$ such that the orbifold of $B$ by $\langle Q\rangle$ is again $A$. The symmetry $Q$ has the same order $N$ as $\delta$ and acts on $B$ by multiplying the states in the $\delta^r$-twisted sector by $e^{\frac{2\pi i r}{N}}$. In particular, if $A$ has a $\delta$-invariant $\CN=1$ supercurrent, the induced supercurrent in $B$ is also $Q$-invariant, because it resides in the untwisted sector. By applying this general procedure to the case we are interested in, then if we can show that $V^{fE_8}$ can be obtained from $F_{24}$ through an orbifold by an $\CN=1$-preserving cyclic group, we know that the orbifold of $V^{fE_8}$ by the `quantum symmetry' will give back $F_{24}$. 

To implement this construction, we need a symmetry $\sigma$ of $F_{24}$ that projects out most of the $24$ spin $1/2$ fields, leaving at most $8$ of them---this is the number of spin $1/2$ fields in $V^{fE_8}$. Furthermore, the currents that are supersymmetric descendants of these $\sigma$-invariant fermions must commute with each other---this is what happens with the supersymmetric descendants of the $8$ free fermions in $V^{fE_8}$.   In order to preserve the supercurrent $G$, it is sufficient that $\sigma$ acts on the $24$ fermions $\lambda^i$---whose supersymmetric descendants (currents) generate one of the Lie algebras $g$ listed in \eqref{N1algebras}---by an automorphism of the corresponding Lie algebra $g$. Explicitly, let $J^i$ be the current superpartner of the free fermion $\lambda^i$, $i=1,\ldots, 24$, and let $\theta$ be a Lie algebra automorphism acting as $J^i\to \theta(J^i)=\sum_j\theta_{ij}J^j$ on the currents. Then, we let the symmetry $\sigma$ act by $\lambda_i\mapsto \sigma(\lambda_i)\equiv \sum_j\theta_{ij}\lambda^j$. Since $\theta$ is an automorphism of $g$, it must preserve the structure constants $c_{ijk}$, which implies that $G\sim \sum_{i,j,k} c_{ijk}\lambda^i\lambda^j\lambda^k$ is also preserved by this symmetry. The condition that the superpartners of $\sigma$-invariant fermions must commute (i.e., they must be contained in some  Cartan subalgebra of the Lie algebra $g$) automatically ensures that there are at most $8$ spin $1/2$ fields surviving the orbifold projection, because the algebras listed in \eqref{N1algebras} have rank at most $8$. This condition can be achieved by taking $\theta$ to be an inner automorphism of $g$ in a given Cartan torus acting non-trivially on all non-zero roots. 

 A symmetry $\sigma$ projecting out all spin $1/2$ fields corresponding to non-trivial roots of $g$, and such that the orbifold is consistent, can be constructed as follows. Let $g=\oplus_k g_k$ be the decomposition of $g$ into simple components $g_k$, with Weyl vectors $\rho_k$ and dual Coxeter numbers $\dcox_{g_k}$. Then, the Weyl vector of $g$ is $\rho=\oplus_k \rho_k\in P_g=\oplus_k P_{g_k}$. With our normalization of the Killing form, the Freudenthal-de Vries strange formula reads
 \be (\rho_k|\rho_k)=\frac{\dim g_k}{12}\ ,
 \ee so that
 \be\label{Weylnorm} (\rho|\rho)=\sum_k (\rho_k|\rho_k)=\frac{\dim g}{12}=2\ .
 \ee

We take the symmetry $\sigma$ to act on the fermion $\lambda^\alpha$ corresponding to a root $\alpha$ by $\sigma(\lambda^\alpha)=e^{2\pi i(\rho|\alpha)}\lambda^\alpha$, and to act trivially on the spin $1/2$ fields corresponding to the Cartan subalgebra. Notice that, for each positive root $\alpha$ in the $g_k$ component, we have
\be  0<(\rho|\alpha) \le (\rho|\theta_k)\ ,
\ee where $\theta_k$ is the highest root of $g_k$. With our normalization for the Killing form, we have
\be (\rho|\theta_k)=(\rho_k|\theta_k)=1-\frac{1}{\dcox_{g_k}}\ ,
\ee so that $0<(\rho|\alpha)<1$ for all positive roots. In particular, $\sigma$ acts non-trivially on all $\lambda^{\pm \alpha}$, so that only the spin $1/2$ fields corresponding to the Cartan subalgebra are preserved by the orbifold projection.

In order to check that the orbifold is consistent, one needs to check the level-matching condition, i.e. to  verify that the levels of the $\sigma$-twisted NS sector are valued in $\frac{1}{2N}\ZZ$, where $N$ is the order of $\sigma$. In general, for a theory of $2n$ free fermions with a symmetry $\sigma$ of order $N$ acting with eigenvalues $e^{\pm 2\pi i r_i}$, $i=1\ldots,n$, $r_i\in \frac{1}{N}\ZZ$,  the $\sigma$-twisted NS states have conformal weights valued in  $\frac{1}{2 }\sum_i r_i^2+\frac{1}{2N}\ZZ$.
In particular, if we take $|r_i|\le \frac12$, then the $\sigma$-twisted ground states have conformal weight exactly $\frac{1}{2 }\sum_i r_i^2$. This standard formula can be obtained, for example, by writing the Virasoro generators $L_n$ in the twisted sector in terms of normal ordered products of fermionic generators, and fixing the normal ordering constant in $L_0$ by requiring that the relation $[L_1,L_{-1}]=2L_0$ is satisfied (see, for example, \cite{Polchinski1}). Applying this formula to our case, we obtain
\be \frac{1}{2 }\sum_i r_i^2=\frac{1}{2}\sum_{\alpha\in\Delta^+} (\rho|\alpha)^2=\frac{1}{2}(\rho|\rho)= 1 \in \frac{1}{2N}\ZZ\ .
\ee Thus, the conformal weights are valued in $\frac{1}{2N}\ZZ$, and the level matching condition is satisfied. We stress that it is not necessarily true that $|(\rho|\alpha)|\le 1/2$ for all $\alpha\in\Delta^+$, so this formula does not imply that the conformal weights of the ground states are always $1$.

We conclude that the orbifold of $F_{24}$ by $\sigma$ is a consistent holomorphic SVOA of central charge $12$, so the only possibilities are $V^{fE_8}$, $F_{24}$ or $V^{f\natural}$. The latter case can be easily ruled out: the orbifold theory contains at least the spin $1/2$ fields $\lambda^i$ corresponding to the Cartan subalgebra of $g$, while $V^{f\natural}$ contains no such fields. Finally, we verified in a case-by-case analysis that the orbifold theory never contains $24$ fields of spin $1/2$, so we conclude that the orbifold is the $V^{fE_8}$ theory.

For example, for an $A_n$ algebra, the automorphism $\sigma$ acts by multiplication by $e^{\frac{2\pi i}{n+1}}$ on the root  space $g_{\alpha_i}$ for every simple root $\alpha_1,\ldots, \alpha_n$; then, for any root $\alpha=\sum_i m_i \alpha_i$ the automorphism acts on the root spaces $g_{\alpha}$ by multiplication by $e^{\frac{2\pi i\sum_i m_i}{n+1}}$; since $\sum_i m_i\le n$ (and $\sum_i m_i \ge -n$ for negative roots) for the $A_n$ algebra \footnote{This is automatic by one of the definitions of the Coxeter number.}, one has that  $\sigma$ acts trivially only on the Cartan subalgebra. Besides the currents in the Cartan subalgebra of $g$, which are supersymmetric descendants, the symmetry $\sigma$ leaves invariant a number of superconformal primary currents. These can be easily determined since we know the eigenvalues of $\sigma$ on $\lambda^i$. Below we shall consider some explicit examples.

\subsection{Examples}
In this section, we summarize the action of $\mathcal N=1$--preserving orbifolds of $F_{24}$ which reproduce the SVOA $V^{fE_8}$, for each choice of $g$ in $F_{24}$. Let $\sigma_g$ be the orbifold symmetry which relates $F_{24}$ with $\mathcal N=1$ supercurrent determined by $g$ to $V^{fE_8}$, and $\lambda_g$ be the set of 24 eigenvalues of $\sigma_g$. As explained in the previous section, these eigenvalues can be computed independently for each simple component $g_k$ of each choice of semisimple Lie algebra $g= \oplus_k g_k$, such that $\lambda_g= \cup_k \lambda_{g_k}$. See Table \ref{tbl:sigmas} for a summary of these eigenvalues for each choice of simple Lie algebra which arises in our construction.

\begin{table}[htb]
\begin{center}
\begin{tabular}{c|c|c}
Simple Lie algebra $g_k$ &Dimension& Eigenvalues $\lambda_{g_k}$\\\hline\hline
$A_1$ & 3& $\{1,-1,-1\}$\\\hline
$A_2$ &8 &$ \{ 1,1, \ex({1\over 3}),\ex({1\over 3}),\ex({1\over 3}),\ex({2\over 3}),\ex({2\over 3}),\ex({2\over 3}) \}$\\\hline
$A_3$ &15&\makecell{$ \{1,1,1,-1,-1,-1,-1, \ex({1\over 4}),\ex({1\over 4}),\ex({1\over 4}),$\\$\ex({1\over 4}),\ex({3\over 4}),\ex({3\over 4}),\ex({3\over 4}),\ex({3\over 4})\}$}\\\hline
$A_4$ &24&\makecell{ $\{1,1,1,1,\ex({1\over 5}),\ex({1\over 5}),\ex({1\over 5}),\ex({1\over 5}),\ex({2\over 5}),\ex({2\over 5}),\ex({2\over 5}),\ex({2\over 5})$,\\$\ex({3\over 5}),\ex({3\over 5}),\ex({3\over 5}),\ex({3\over 5}),\ex({4\over 5}),\ex({4\over 5}),\ex({4\over 5}),\ex({4\over 5}) \}$}\\\hline
$B_2$ &10&$\{1,1,-1,-1,\ex({1\over 3}),\ex({1\over 3}),\ex({2\over 3}),\ex({2\over 3}),\ex({1\over 6}),\ex({5\over 6}) \}$\\\hline
$B_3$ &21&\makecell{ $\{1,1,1,-1,-1,\ex({1\over 5}),\ex({1\over 5}),\ex({1\over 5}),\ex({2\over 5}),\ex({2\over 5}),\ex({2\over 5}),$\\$\ex({3\over 5}),\ex({3\over 5}),\ex({3\over 5}),\ex({4\over 5}),\ex({4\over 5}),\ex({4\over 5}),\ex({1\over 10}),\ex({3\over 10}),\ex({7\over 10}),\ex({9\over 10}) \}$}\\\hline
$C_3$ &21&\makecell{ $\{1,1,1,-1,-1,-1,-1,\ex({1\over 4}),\ex({1\over 4}),\ex({1\over 4}),\ex({3\over 4}),\ex({3\over 4}),\ex({3\over 4}),$\\$\ex({1\over 8}),\ex({1\over 8}),\ex({3\over 8}),\ex({3\over 8}),\ex({5\over 8}),\ex({5\over 8})\ex({7\over 8}),\ex({7\over 8}) \}$}\\\hline
$G_2$ &14&\makecell{$ \{1,1,-1,-1,\ex({1\over 4}),\ex({1\over 4}),\ex({3\over 4}),\ex({3\over 4}),\ex({1\over 3}),\ex({2\over 3}),$\\$\ex({1\over 12}),\ex({5\over 12}),\ex({7\over 12}),\ex({11\over 12})\}$}
\end{tabular}\caption{Eigenvalues  for symmetries $\sigma_g$ which relate $F_{24}$ with $g$--preserving $\mathcal N=1$ supercurrent to $V^{fE_8}$. The 24 eigenvalues $\lambda_g$ of $\sigma_g$ can be decomposed into sets of eigenvalues which act on each simple component of $g$, such that $\lambda_g= \cup_k \lambda_{g_k}$ whenever $\sigma_g= \oplus_k \sigma_{g_k}$. Here we use the abbreviation $\ex(x) := e^{2\pi i x}$.}\label{tbl:sigmas}
\end{center}
\end{table}
Below we give more details of these orbifolds for several choices of Lie algebra $g$:
\begin{itemize}
\item {$g=A_1^8$} 

We have already studied this case, albeit beginning with an orbifold of $V^{fE_8}$; let us consider the same algebra from an orbifold of $F_{24}$. We take the automorphism $\sigma$ to act by $+1$ on the $8$ spin $1/2$ fermions corresponding to a Cartan subalgebra of $A_1^8$, which we denote by $\lambda^{a+}$, $a=1,\ldots,8$, and to act by $-1$ on the $16$ spin $1/2$ fermions corresponding to non-zero roots, which we denote by $\lambda^{a-}$, $a=1,\ldots,16$. The untwisted sector contains the $8$ fields $\lambda^{a+}$ of spin $1/2$, as well as the currents $\lambda^{a+}\lambda^{b+}$, $1\le a<b\le  8$, and $\lambda^{a-}\lambda^{b-}$, $1\le a<b\le  16$, which together form a $so(8)\oplus so(16)$ algebra of dimension $28+120=148$. In the $\sigma$-twisted (NS) sector, the fermions $\lambda^{a+}$ have a mode expansion $\lambda^{a+}=\sum_r \lambda^{a+}_rz^{-r-1/2}$ with $r\in \frac{1}{2}+\ZZ$, while the $\lambda^{a-}$ have modes $\lambda^{a-}_r$ with $r\in \ZZ$. The ground state level can be easily computed to be $16*\frac{1}{4}(1/2)^2=1$. The ground states must form a representation of the Clifford algebra of the $16$ zero modes $\lambda^{a-}_0$, so they must be degenerate with multiplicity $2^8=256$. Since each of these zero modes changes the $\sigma$-eigenvalue of the states, half of these $256$ ground states are $\sigma$-invariant and half have $\sigma$-eigenvalue $-1$. It follows that, after the orbifold projection, there will be $128$ additional currents, commuting with the $so(8)$ generated by $\lambda^{a+}\lambda^{b+}$ and transforming in a spinor representation of the $so(16)$ generated by $\lambda^{a-}\lambda^{b-}$. Together, the untwisted $so(16)$ currents and the $128$ $\sigma$-twisted currents form a copy of the $E_8$ current algebra.

\item {$g= A_2^3$} 

We let $\sigma$ act trivially on the six $\lambda^i$ corresponding to the Cartan subalgebra and by multiplication by $\omega=e^{\frac{2\pi i}{3}}$ on each of the spin $1/2$ fields corresponding to simple roots. Therefore, the eigenvalues of $\sigma$ on the $24$-dimensional representation of spin $1/2$ fermions are: $1$ with multiplicity $6$,  $e^{\frac{2\pi i}{3}}$ with multiplicity $9$, and $e^{-\frac{2\pi i}{3}}$ with multiplicity $9$. The untwisted sector currents are $\lambda_a^1\lambda_b^1$, $1\le a<b\le 6$, and $\lambda^\omega_a\lambda^{\bar\omega}_b$, $1\le a,b,\le 9$ (note that we do not require $a<b$ here). We have a total of $15+81=96$ currents, forming an $so(6)\oplus u(9)\cong so(6)\oplus u(1)\oplus su(9)$ algebra. In the $\sigma$-twisted sector, the $\lambda_{a}^{\omega}$ are moded in $\frac{1}{6}+\ZZ$ and the $\lambda_{a}^{\bar\omega}$ are moded in $\frac{5}{6}+\ZZ$ (and vice versa in the $\sigma^2$-twisted sector). Thus, the level of the twisted sector ground state is $18\times \frac{1}{4}(1/3)^2=1/2$. There are no zero modes, so that there is a unique ground state in each of the twisted sectors, and we can choose them to be $\sigma$-invariant. This gives $6$ untwisted spin-$1/2$ fields and one more from each of the two twisted sectors, for a total of $8$ spin-$1/2$ fields. From each twisted sector, we have six additional $\sigma$-invariant currents of the form $\lambda_{a,-1/2}^{1}|gr\rangle$, $a=1,\ldots,6$, and $9\cdot 8\cdot 7/3!=84$ of the form $\lambda^{\omega}_{a,-1/6}\lambda^{\omega}_{b,-1/6}\lambda^{\omega}_{c,-1/6}|gr\rangle$, $1\le a<b<c\le 9$  obtained by acting on the ground state $|gr \rangle$. The untwisted $so(6)\oplus u(1)$, together with the six currents $\lambda_{a,-1/2}^1|gr\rangle$ from each of the two twisted sectors, combine into an $so(8)$ algebra. The untwisted $su(9)$, together with the $84$ $\lambda^{\omega}_{a,-1/6}\lambda^{\omega}_{b,-1/6}\lambda^{\omega}_{c,-1/6}|gr\rangle$ from each twisted sector combine to form the $E_8$ algebra.

\item {$g=A_3A_1^3$} 

We let $\sigma$ act as it did for $g=A_1^8$ on the subset of the $\lambda_i$ corresponding to $A_1^3$, and let $\sigma$ act by multiplication by $i$ on the simple roots of $A_3$. The eigenvalue distribution in the form (multiplicity $\times$ eigenvalue) in the $24$-dimensional representation of the free fermions is $(6 \times 1)$, $(10\times -1)$, $(4\times i)$, $(4\times -i)$. The invariant currents form the algebra $so(6)\oplus so(10)\oplus u(4)\cong so(6)\oplus so(10)\oplus u(1)\oplus su(4)$, for a total of $15+45+16=76$ untwisted sector currents. The ground states of the $\sigma$- and $\sigma^3$-twisted sector have level $\frac{1}{4}(4*(1/4)^2+4*(1/4)^2+10*(1/2)^2)=3/4$; there are $2^5=32$ degenerate ground states, forming a representation of the Clifford algebra of the zero modes of the $10$ free fermions with eigenvalues $-1$. Half of them have a $\sigma$-eigenvalue $\zeta$ while the other half have eigenvalue $-\zeta$. We fix the action of $\sigma$ on the twisted sectors in such a way that $\zeta=i$. This means that in each of the $\sigma$- and $\sigma^3$-twisted sectors, there are $4\cdot 16=64$ $\sigma$-invariant currents, transforming in a $(16,4)$-representation of $so(10)\oplus u(4)$.  Since $\sigma^2$ has an eigenvalue distribution $(16 \times 1)$, $(8\times -1)$, the $\sigma^2$-twisted ground states have level $8\times \frac{1}{4}*(1/2)^2=\frac{1}{2}$, and form a $2^4=16$ dimensional representation of the Clifford algebra of the $8$ fermionic zero modes. The $\sigma$-eigenvalue distribution on the ground states is $(2\times 1)$, $(6\times -1)$, $(4\times i)$, $(4\times -i)$. Currents in the $\sigma^2$-twisted sector are obtained by acting on the ground states with one of the $16$ half-integrally moded fermions, which have $\sigma$-eigenvalues $+1$ ($6$ of them) or $-1$ ($10$ of them). Therefore, we get $6\times 2+10\times 6=72$ $\sigma$-invariant currents. In total, we have $76+64+72+64=276$ currents, as expected. In particular, the $so(6)\oplus u(1)$ algebra, together with the $6\times 2$ currents in the $\sigma^2$-twisted sector, form the `fermionic' $so(8)$  obtained from the OPE of two spin $1/2$ fields, while the $so(10)\oplus su(4)$ algebra in the untwisted sector combines with the $64$ in each of the $\sigma$- and $\sigma^3$-twisted sectors (in the $(16,4)$ and $(16,\bar 4)$ representation of $so(10)\oplus su(4)$) and the $10\times 6$ from the $\sigma^2$-twisted sector (in the $(10,6)$ representation of $so(10)\oplus su(4)$) to form the $E_8$ algebra.

\item {$g=G_2B_2$ \& the rest} 

For the other algebras, in particular with non-simply-laced components, the analysis is slightly more complicated. We illustrate the general procedure by describing one example, the $g=G_2B_2$ case. The symmetry $\sigma$ has order $12$ and fixes $4$ fermions, corresponding to the Cartan subalgebra of $g$. On the remaining $20$ fermions, the eigenvalues are $e^{\pm 2\pi r_i}$, where the $r_i\in \frac{1}{12}\ZZ$, $i=1,\ldots, 10$, are $\frac{1}{4}$, $\frac{1}{12}$, $\frac{1}{3}$, $\frac{5}{12}$, $\frac{1}{2}$,  $\frac{3}{4}$ from the $G_2$ component, and $\frac{1}{3}$, $\frac{1}{6}$,  $\frac{1}{2}$, $\frac{2}{3}$ from the $B_2$ component. 
From these data, one can compute the conformal weights of the $\sigma^n$-twisted ground states (NS sector), obtaining $\frac{7}{12}$ for $n=1,5,7,11$,  $\frac{1}{2}$ for $n=2,10$,  $\frac{1}{2}$ for $n=3,9$, $\frac{1}{3}$ for $n=4,8$, $\frac{1}{2}$ for $n=6$.
 As expected, the level matching condition is satisfied. The degeneracy of the ground states is determined by the number of fermionic zero modes in each twisted sector. Using the algebra of fermionic oscillators, one can determine the number of spin-$1/2$ states in each twisted sector. Next, one needs to project on the $\sigma$-invariant subspace. On each twisted sector, the action of $\sigma$ can be easily determined up to an overall phase. The general theory of orbifolds tells us that, when the level matching condition is satisfied, there exists a choice for these phases such that the $\sigma$-invariant fields define a consistent SVOA. However, determining the right phases explicitly is usually complicated, so in practice we take the following shortcut for this example. We verified that, for any choice of the phases, the total number of $\sigma$-invariant spin $1/2$ fields is less than $24$ (and more than $0$). This is sufficient to conclude that the orbifold theory is $V^{fE_8}$. As a consistency check, we also verified that there is a choice of phases for which the number of $\sigma$-invariant spin $1/2$ fields  is exactly $8$. We performed a similar analysis for all choices of $g$, and in all cases we obtained $V^{fE_8}$ as the orbifold theory.
 \end{itemize}
 
 With the analysis of $F_{24}$ orbifolds complete, we will now turn to a uniform construction of BKM superalgebras for each choice of $g$. 

\section{BKM superalgebras from $\CN=1$ SVOAs: a general construction}\label{s:BKM}

In this section, we describe a procedure to construct a BKM superalgebra starting from a holomorphic SVOA $V$ with central charge $c=12$ with an $\CN=1$ structure. The procedure is heavily inspired by the definition of physical states in superstring theory---it is, in a sense, a `chiral version' of that construction---and is a supersymmetric generalization of Borcherds' construction of the fake Monster and Monster Lie algebras \cite{BorcherdsFake,BorcherdsMM} that was inspired by bosonic string theory.  The main steps have been developed in \cite{Sch1} for the specific example where $V$ is the lattice SVOA $V^{fE_8}$ based on the $E_8$ lattice, and has been generalized to the example where $V=V^{f\natural}$ in \cite{Harrison:2018joy} . We will  briefly describe the main steps in this construction and refer to \cite{Harrison:2018joy,Sch1,LZ} for the proofs of most statements.

\subsection{Super vertex algebras}
For this construction, we need to consider a super vertex algebra (SVA) $V^{tot}$ given by a product
\be V^{tot}=V^m\otimes V^{gh}=V^{int}\otimes V^{X,\psi}\otimes V^{gh}
\ee where the `matter' SVA $V^m$ has central charge $15$ and the `ghost' SVA $V^{gh}$ has central charge $-15$. The matter SVA $V^m$ is itself a product of an `internal' SVOA $V^{int}$ of central charge $12$ and a `space-time' SVA $V^{X,\psi}$ of central charge $3$.

Each of these SVAs has a $\ZZ_2$-grading given by a fermion parity $(-1)^F$, and there is a canonically twisted module on which one can choose an action of the fermion parity $(-1)^F$ (though with a certain ambiguity). We again use the physics parlance: the vertex superalgebras are the Neveu-Schwarz (NS) sector and the twisted module is the Ramond (R) sector. For this reason, we will often put a subscript $NS$ on the SVAs, such as $V^{int}_{NS}\equiv V^{int}, V^{gh}_{NS}\equiv V^{gh}$ etc. 

In more detail, the various factors are as follows:
\begin{itemize}
	\item  The `internal (Neveu-Schwarz) sector' of the superstring theory $V^{int}_{NS}=V^{int}$ is a $\CN=1$ self-dual SVOA  (holomorphic SCFT) of central charge $12$. Up to a choice of the $\CN=1$ supercurrent, there are only three possible such SVOAs, up to isomorphism \cite{CDR}. One is given by the $24$ free fermion SVOA $F_{24}$ described in section \ref{s:F24}; the other two are the supersymmetric $E_8$ lattice SVOA $V^{fE_8}$ and the Conway module $V^{f\natural}$ studied in \cite{Duncan}. The name `internal' comes from the idea of compactifying the $10$-dimensional spacetime of a type II superstring on an $8$-dimensional compact manifold, whose corresponding non-linear sigma model is a SCFT of central charge $12$. Note however that the standard superstring construction has also an anti-holomorphic sector, while our construction in this article is chiral. The space $V^{int}_{NS}(\frac{1}{2})$ of states of conformal weight ($L_0$-eigenvalue) $1/2$ is $24$-dimensional for $F_{24}$, $8$ dimensional for $V^{fE_8}$ and $0$-dimensional for $V^{f\natural}$. The space $G^V_{-1/2}V^{int}_{NS}(\frac{1}{2})\subseteq V^{int}_{NS}(1)$ of their superpartners will be relevant in the following. The zero modes of these currents generate a finite-dimensional Lie algebra $g$, where $g=0$ for $V^{f\natural}$ and $g=u(1)^{\oplus 8}$ for $V^{fE_8}$; for $F_{24}$ the possible algebras $g$, which are non-abelian and depend on the choice of the supercurrent $G^V$, are described in section \ref{s:F24}.   The canonically twisted module (Ramond sector) is denoted by $V^{int}_R$.
	\item The `uncompactified' directions are represented by an SVA $V^{X,\psi}\equiv V^{X,\psi}_{NS}$ (the NS sector) based on the even unimodular lattice $\Gamma^{1,1}$ of signature $(1,1)$, and its canonically twisted module $V_R^{X,\psi}$.  The basic fields are two chiral free bosons $ X^+(z),X^-(z)$
	 and their superpartners, the free fermions $\psi^+,\psi^-$.
	The chiral bosons alway appear either with derivatives $\partial^n X^\pm$, $n\ge 1$, or exponentiated in the form of vertex operators $e^{ikX}$ for each $k\in \Gamma^{1,1}$. A convenient description of $\Gamma^{1,1}$  is as the lattice of vectors $k\equiv (k^+,k^-)=(m,n)\in \ZZ\oplus \ZZ$ with quadratic  form
	\be k^2\equiv k^\mu k_\mu\equiv \eta_{\mu\nu} k^\mu k^\nu=-2k^+k^-=-2mn\ .
	\ee Roughly speaking, $X^+,X^-$ represent the light-cone coordinates in a $1+1$ dimensional space-time $\RR^{1,1}$ with metric $\eta_{++}=\eta_{--}=0$, $\eta_{+-}=-1$.  The mode expansions are
	\be i\partial X^\mu=\sum_{n\in \ZZ} \alpha^\mu_n z^{-n-1}\ ,\qquad \mu\in \{+,-\}\ .
	\ee
	\be \psi^\mu(z)=\sum_{r\in \ZZ+\nu} \psi^\mu_r z^{-r-1/2}\ ,\qquad \mu\in \{+,-\}\ .
	\ee where $\nu=1/2$ in the NS sector and $\nu=0$ in the R sector. The vertex operators $e^{ikX}$ correspond to states $|k\rangle$ that are eigenstates of the the zero modes $P^\mu=\alpha_0^\mu$ (momentum operators) of $i\partial X^\mu$ with eigenvalues $k^\mu$, $\mu\in \{+,-\}$. 
	The stress energy tensor $T^{X,\psi}(z)$ and $\CN=1$ supercurrent $G^{X,\psi}(z)$ are given by
	\be T^{X,\psi}(z)=\frac{1}{2}:\partial X^\mu \partial X_\mu:(z) +\frac{1}{4} :\psi_\mu \partial \psi^\mu:(z)
	\ee
	\be G^{X,\psi}(z)=:\psi^\mu\partial X_\mu:(z)
	\ee and generate an $\CN=1$ superalgebra with central charge $c=3$. The fields $\psi^\mu$ and $e^{ikX}$ are superconformal primaries with conformal weight $1/2$ and $k^2/2$, respectively; the fields $\partial X^\mu$ are superconformal descendants of $\psi^\mu$ and have weight $1$.\\
	We refer to the product of the SVAs $V^{int}$ and $V^{X,\psi}$ as the \emph{matter sector} $V^m=V^{int}\otimes V^{X,\psi}$, with matter stress-energy tensor
	\be T^m(z)=T^V(z)+T^{X,\psi}(z),
	\ee
	$\CN=1$ supercurrent
	\be G^m(z)=G^V(z)+G^{X,\psi}(z),
	\ee and total central charge $c=15$.
	\item The ghost sector $V^{gh}$ is a SVA generated by the anticommuting bosonic fields
	\be b(z)=\sum_{n\in \ZZ} b_n z^{-n-2}\qquad c(z)=\sum_{n\in \ZZ} c_n z^{-n+1}\ ,
	\ee and their superpartners, the commuting fermionic fields 
	\be \beta (z)=\sum_{r\in \ZZ+\nu } \beta_r z^{-r-3/2}\qquad \gamma(z)=\sum_{r\in \ZZ + \nu} \gamma_r z^{-r+1/2}\ ,
	\ee where $\nu =1/2$ in the NS sector $V^{gh}_{NS}\equiv V^{gh}$ and $\nu=0$ in the Ramond sector $V^{gh}_{R}$ (the canonically twisted module of $V^{gh}$). The stress-energy tensor $T^{gh}(z)$ and the $\CN=1$ supercurrent $G^{gh}(z)$ are given by
	\be T^{gh}= -:(\partial b) c:-2 :b\partial c:-\frac{1}{2}:(\partial\beta)\gamma:-\frac{3}{2}:\beta\partial\gamma:
	\ee 
	\be G^{gh}(z)=-(\partial\beta) c(z)-\frac{3}{2}\beta\partial c(z)  -2b\gamma(z)
	\ee and form an $\CN=1$ superVirasoro algebra with central charge $c=-15$. The fields $b$ and $c$ and their superpartners $\beta$ and $\gamma$ have conformal weights $2$, $-1$,  $3/2$ and $-1/2$, respectively. One can define a ghost quantum number, with respect to which $c$ and $\gamma$ has charge $+1$, while $b$ and $\beta$ have charge $-1$. It is often useful to have an alternative description of the superghosts $\beta,\gamma$ as a subalgebra of a SVA generated by two anticommuting fields $\xi$ and $\eta$ of conformal weight $0$ and $1$  and a chiral scalar $\phi$
	\be \beta=\partial \xi e^{-\phi}\qquad \gamma=\eta e^{\phi}\ .
	\ee The fields $\eta, \xi$ obey the same OPE as $b$ and $c$, while $\phi$ generates a lattice vertex algebra based on a $1$-dimensional lattice, and always appears with derivatives $\partial^n\phi$, $n>0$ or in exponentials $e^{\frac{m}{2}\phi}$, $m\in \ZZ$. More precisely, the fields $e^{\frac{m}{2}\phi}$, have $m$ even or odd depending on whether they act on the NS or the R sector. The stress energy tensor becomes \be T^{gh}= -:(\partial b) c:-2 :b\partial c:=-\frac{1}{2}\partial\phi\partial\phi-\partial^2\phi-\eta\partial\xi\ 
	\ee in terms of these fields. Note that the SVA generated by $\eta,\xi$ and $\phi$ is strictly larger than the one generated by $\beta$ and $\gamma$.
	
The $\beta\gamma$-module built starting from the $PSL(2,\CC)$-invariant vacuum $|0\rangle$ is unbounded from below, since the states $(\gamma_{1/2})^n|0\rangle$ have arbitrarily low $L_0$-eigenvalues. More generally, one can consider different $\beta\gamma$-modules, starting from a state $|p\rangle$, $p\in \ZZ$ (NS sector) or $p\in\frac{1}{2}+\ZZ$ (R sector), satisfying
\begin{align}\label{pvaccum1} 
&\beta_r|p\rangle=0, & r\ge -p-1/2\\
\label{pvaccum2}&\gamma_r|p\rangle=0, & r> p+1/2\ .
\end{align} 
$p$ is the picture number of the $\beta\gamma$-module. It is easy to see that only for $p\in\{-1,-1/2,-3/2\}$, all positive modes of both $\beta$ and $\gamma$ annihilate $|p\rangle$; therefore, only in this case the $L_0$ eigenvalues are bounded from below with  $|p\rangle$ having the lowest eigenvalue (ground state).  
The different $\beta\gamma$-modules are related to each other in the larger algebra generated by $\xi,\eta,\phi$, by
\be |p\rangle =e^{p\phi}|0\rangle.
\ee
\end{itemize}

The full SVA  \be V^{tot}_{NS}\equiv V^{tot}=V^{int}\otimes V^{X,\mu}\otimes V^{gh}\equiv V^m\otimes V^{gh}\ \ee  contains a $\CN=1$ superVirasoro subalgebra with central charge $c^{tot}=0$ generated by the stress-energy tensor  $T(z)=\sum_n L_n z^{-n-2}$
\be T=T^m+T^{gh}=T^V+T^{X,\psi}+T^{gh}
\ee and $\CN=1$ supercurrent
\be G=G^m+G^{gh}=G^V+G^{X,\psi}+G^{gh}\ .
\ee
The fermion number operator on $V^{tot}$, leaving bosons fixed and multiplying fermions by $-1$, is the product of the fermion number operators on the single factors. The canonically twisted module (Ramond sector) of $V^{tot}$ is just the product 
\be V^{tot}_R=V^{int}_R\otimes V^{X,\mu}_R\otimes V^{gh}_R=V^m_R\otimes V^{gh}_R \ ,\ee of the Ramond sectors, where we defined the matter Ramond sector $V^m_R=V^{int}_R\otimes V^{X,\mu}_R$. For each of these SVA,  the action of the fermion number operator on the algebra can be extended to an action on the Ramond sector. There is a certain ambiguity in choosing this expansion; we assume that a choice has been made, so that $(-1)^F$ has order $2$ on the Ramond sector.

\subsection{BRST cohomology}
The next step on the path to obtaining the chiral physical states, mimicking the usual superstring construction, is to perform a GSO projection, i.e. to consider only the even subspace \be
V^{GSO}\equiv V^{tot}_{NS+}\oplus V^{tot}_{R+}\ ,\ee
which is the eigenspace of the total fermion number with eigenvalue $+1$, and then to further restrict to the kernel of $b_0,L_0$ (NS sector) or to the kernel of $b_0,L_0,\beta_0,G_0$ (R sector)\footnote{Strictly speaking, it is not clear that one needs to impose the extra restrictions to $\ker \beta_0, \  \ker G_0$ in the R sector, though it is a choice that is sometimes made for convenience. The states at nonzero momentum are insensitive to this restriction, but computations of the cohomology of states at zero-momentum can become simpler. In \cite{HPPV} we will study a non-chiral superstring theory based on $V^{f\natural}\otimes \bar{V}^{f\natural}$ and compute the physical states without imposing this extra R sector condition.}
\be C=(V^{tot}_{NS+}\cap \ker \langle b_0,L_0\rangle)\oplus (V^{tot}_{R+}\cap \ker \langle b_0,L_0,\beta_0,G_0\rangle)\ ,
\ee where $\langle b_0,L_0\rangle$ and $\langle b_0,L_0,\beta_0,G_0\rangle$ denote the subalgebras generated by the corresponding elements.
On this space, we introduce some gradings given by the ghost and picture numbers $n\in \ZZ$ and $p\in \frac{1}{2}\ZZ$, and the momentum $k=(k^+,k^-)\in \Gamma^{1,1}\cong\ZZ\oplus \ZZ$
\be C=\oplus_{k \in \Gamma^{1,1}} C(k)=\oplus_{k,p,n} C^n_{p}(k)\ .
\ee Notice that the NS and the Ramond sector can be distinguished by their picture number $p$, which is integral in the NS sector and half-integral in the Ramond sector.  We now introduce the BRST charge
\be Q=\sum_n c_{-n}L^m_n +\sum_r \gamma_{-r}G^m_r+\frac{1}{2} (:cT^{gh}:)_0+\frac{1}{2}(:\gamma G^{gh}:)_0\ ,\ee which is a nilpotent operator that commutes with $p,k$ and shifts the ghost number by $1$. For each picture number $p$ and momentum $k\in \Gamma^{1,1}$, we have a complex
\be \ldots C^{n-1}_{p}(k) \stackrel{Q}{\longrightarrow} C^n_{p}(k) \stackrel{Q}{\longrightarrow} C^{n+1}_{p}(k) \stackrel{Q}{\longrightarrow} \ldots
\ee and we define the corresponding cohomology spaces as $H^n_{p}(k)$.

Let us recall some results about this cohomology:
\begin{description}
	\item[Picture changing] For each $n,k$, there is a `picture raising operator' homomorphism \be X:H^n_{p}(k)\to H^n_{p+1}(k)
	\ee which is an isomorphism if $k\neq 0$. Therefore, at least at nonzero momentum, one is led to consider only the cohomology groups in the `canonical pictures'  $p=-1$ (NS sector) and $p=-1/2, -3/2$ (R sector). It is also reasonable to expect that one can restrict to these canonical pictures at zero momentum without losing any interesting information, and we do so throughout this text. 
	\item[Canonical ghost number] For $k\neq 0$ and $n\neq 1$,
	\be H^n_{p}(k)=0\ .
	\ee This theorem is proved in section $3$ of \cite{LZ}; an alternative proof is given in \cite{FigueroaKimura}. The proof only uses the fact that for $k\neq 0$ the matter sector is a free module for the superalgebra generated by the negative modes of the matter superVirasoro algebra. This is true for any critical internal SVOA $V^{int}$, so the theorem generalizes immediately to the case we are considering. It fails for $k=0$ because the module is not free in that case: there are relations corresponding to the fact that $G_{-1/2}$ and $L_{-1}$ annihilate the $PSL(2,\CC)$-invariant vacuum state of the matter SVA $V^m$. 
	\item[Bilinear form] There is a non-degenerate  bilinear form $( \cdot, \cdot)_H$ pairing $H^{n}_p(k)$ with $H_{-2-p}^{2-n}(-k)$, which is defined in terms of the bilinear forms on the matter and ghost vertex algebras. In particular, $( \cdot, \cdot)_H$ restricts to a non-degenerate form on $\oplus_k H^{1}_{-1}(k)$. In the Ramond sector, this bilinear form non-degenerately pairs $H^{1}_{-1/2}(k)$ and $H^{1}_{-3/2}(-k)$. For $k\neq 0$, combining this bilinear form with the spectral flow isomorphism $X:H^{1}_{-3/2}(k)\to H^{1}_{-1/2}(k)$, we get a non-degenerate bilinear form on $\oplus_{k\neq 0}H^{1}_{-1/2}(k)$. For $k=0$, the homomorphism $X$ might have a non-trivial kernel and the induced bilinear form on $H^{1}_{-1/2}(0)$ might be degenerate. We will deal with the $k=0$ case separately in the following. 
	\item[Equivalence with light-cone quantization] For $k\neq 0$, the no-ghost theorem ensures that there is an isomorphism of vector spaces 
	$$H^1_{-1}(k)\cong V^{int}_{NS-}(-\frac{k^2}{2}+\frac{1}{2})\ ,\qquad \qquad k\neq 0\ ,$$
	where $V^{int}_{NS-}(h)$ denotes the component of the internal SVOA $V^{int}$ with $L_0$-eigenvalue $h$ and negative fermion number (the latter condition is automatically satisfied, since $h=-\frac{k^2}{2}+\frac{1}{2}\in \frac{1}{2}+\ZZ$).  Similarly, there is an isomorphism of vector spaces 
	$$H^1_{-1/2}(k)\cong V^{int}_{R+}(-\frac{k^2}{2}+\frac{1}{2})\cong V^{int}_{R-}(-\frac{k^2}{2}+\frac{1}{2})\ ,\qquad \qquad k^2\neq 0\ ,$$
	where $V^{int}_{R+}(h)$ and $V^{int}_{R-}(h)$ denote the components of the canonically twisted module $V^{int}_R$ of the internal SVOA $V$ with $L_0$-eigenvalue $h$ and with positive (respectively, negative) fermion number. The isomorphism $V^{int}_{R+}(h)\cong V^{int}_{R-}(h)$ is given by the zero mode $G_0^V$ of the $\CN=1$ supercurrent. 
	For null momentum $k$, it is convenient to perform a case-by-case analysis. For $F_{24}$, since $V_{R-}(\frac{1}{2})=V_{R+}(\frac{1}{2})=0$, we simply get 
	\be H^1_{-1/2}(k)=0\qquad\qquad   \text{for $k^2=0$, \ \ $V^{int}=F_{24}$}\ .\ee For the $V^{fE_8}$ and $V^{f\natural}$ cases, we refer to \cite{Sch1} and \cite{Harrison:2018joy}, respectively.
	In superstring theory, the isomorphisms for $k\neq0$ are the statements that the BRST quantization for non-zero momentum is equivalent to the light-cone quantization. This isomorphism is actually an isometry, since it preserves the bilinear forms on the cohomology groups and on the SVOA (and its module).
	\item[Cohomology representatives from old covariant quantization] A particularly useful set of representatives for the cohomology classes in $H^1_{-1}(k)$ is given  by states of the form
	\be\label{OCQstates} c_1e^{-\phi} |\chi,k\rangle\ ,
	\ee where $|\chi,k\rangle$ is a state of momentum $k$ in the matter SVA $V^m_{NS}=V^{int}_{NS}\otimes V^{X,\psi}_{NS}$ that satisfies
	\begin{align}\label{OCQ1} &(L^m_0-\frac{1}{2}) |\chi,k\rangle=0\ ,\\
\label{OCQ2} &L^m_n|\chi,k\rangle=G^m_r|\chi,k\rangle=0\ ,\qquad\qquad \forall n,r>0\ ,
	\end{align} i.e. it is a superconformal primary of weight $1/2$. It is easy to see that states of the form \eqref{OCQstates} and satisfying \eqref{OCQ1},\eqref{OCQ2} are $Q$-closed and therefore define classes in $H^1_{-1}(k)$. Vice versa it can be shown \cite{Polchinski1}  that every class in $H^1_{-1}(k)$ has a representative of this form, but possibly more than one (i.e. some of the states \eqref{OCQstates} might be $Q$-exact). Similarly, all classes in $H^1_{-1/2}(k)$ admit (possibly non-unique) representatives of the form
	\be c_1e^{-\phi/2} |u,k\rangle
	\ee where $|u,k\rangle$ is a state of momentum $k$ in the matter Ramond sector $V^m_{R}=V^{int}_{R}\otimes V^{X,\psi}_{R}$ with fermion number $-1$ and such that 
	\begin{align}
	\label{OCQ2R} &L^m_n|u,k\rangle=G^m_r|u,k\rangle=0\ ,\qquad\qquad \forall n>0,r\ge 0\ ,
	\end{align} (since $(G^m_0)^2=L^m_0-\frac{5}{8}$, the condition $G^m_0|u,k\rangle=0$ implies $(L^m_0-\frac{5}{8})|u,k\rangle=0$.) The definition of the space of physical states in terms of states satisfying \eqref{OCQ1},\eqref{OCQ2} or \eqref{OCQ2R} is known as the `Old Covariant Quantization' in superstring theory.
	\item[Zero momentum] It is clear from the previous observations that the $k=0$ sector needs to be considered separately, since most of the theorems we mentioned above do not apply in this case. Fortunately, it is easy to find the cohomology groups by a direct computation. The analysis is described in appendix \ref{a:details}. The outcome is that in the $(-1)$-picture (NS sector) the cohomology is non-zero only at degrees $0,1,2$, with \be \dim H^0_{-1}(0)=1
	\ee \be \dim H^1_{-1}(0)=\dim V^{int}_{NS}(\frac{1}{2})+2\ee 
	\be \dim H^2_{-1}(0)=1.
	\ee In fact, all states in $C_{-1}^i(0)$ are $Q$-closed and there are no $Q$-exact states, so that there is an isomorphism of $C_{-1}^i(0)\cong H^i_{-1}(0)$. In particular, $H^1_{-1}(0)$ is spanned by
	\be \psi^\mu_{-1/2}e^{-\phi}c_1|0\rangle\ ,\qquad\qquad \mu\in\{+,-\}\ee
	\be\label{zeromomstates} v^a_{-1/2}e^{-\phi}c_1|0\rangle\ ,\qquad i=1,\ldots,N=\dim V^{int}_{NS}(\frac{1}{2})\ ,\ee
where $v^a$, $a=1,\ldots, N$ are the fields of weight $1/2$ in $V^{int}$. One has $N=24$ for $F_{24}$, $N=8$ for $V^{fE_8}$ and $N=0$ for $V^{f\natural}$. Notice that these states are of the form \eqref{OCQstates}.   \\
In the $(-1/2)$-picture, the cohomology is non-zero only at ghost number $1$, with
\be \dim H^1_{-1/2} (0)=\dim V^{int}_R(\frac{1}{2})\ .
\ee The description of Ramond states is a bit more complicated, and we refer to \cite{Sch1} and \cite{Harrison:2018joy} for the cases $V^{int}=V^{fE_8}$ and $V^{int}=V^{f\natural}$. When $V^{int}=F_{24}$, one has $\dim V^{int}_R(\frac{1}{2})=0$, so there is no cohomology for $k=0$ at picture number $-1/2$ or $-3/2$.
\end{description}
To conclude, the space of physical states is given by
\be \CH_{phys} =\oplus_{k\in \Gamma^{1,1}} (H^1_{-1}(k)\oplus H^1_{-1/2}(k))\ .
\ee
Notice that the dimensions of the cohomology spaces do not depend on the choice of the $\CN=1$ structure on $V^{int}$, in particular when $V^{int}$ is $F_{24}$. However, the representatives of the cohomology classes do depend on this choice and, most importantly, the superalgebra of physical states that we will define in the next section depends on this choice.

\subsection{Lie superalgebra of physical states}\label{sec:LiePhysStates}

Let us now exhibit the structure of Lie superalgebra on the space $\CH_{phys}$ of physical states. The starting point is to define a Lie superalgebra structure on the graded complex $C$. Following \cite{LZ2}, we define a Lie bracket $\{,\}:C^n_p(k)\times C^m_q(k')\to C^{n+m-1}_{p+q}(k+k')$ by
\be \{u,v\}=(-1)^{|u|}(b_{-1}u)_0v
\ee where the parity $|u|\in \ZZ/2\ZZ $ of an element $u\in C^n_p(k)$ is defined by $|u|=n+2p+1\mod 2$. This Lie bracket satisfies $\ZZ_2$-graded versions of skew-symmetry and Jacobi identity, and is compatible with the picture changing operator $X$ and BRST charge $Q$, in the sense that
\be X\{u,v\}=\{Xu,v\}=\{u,Xv\}\ ,
\ee
\be Q\{u,v\}=\{Qu,v\}+(-1)^{|u|+1}\{u,Qv\}\ .
\ee The latter property ensures that $\{,\}$ induces a well-defined bracket (which, by slight abuse of notation, we denote by the same symbol) $\{,\}:H^n_p(k)\times H^m_q(k')\to H^{n+m-1}_{p+q}(k+k')$ on the BRST cohomology -- the bracket between $Q$-closed states is still $Q$-closed, and the bracket between a $Q$-exact and a $Q$-closed state is $Q$-exact.

One then defines a Lie superalgebra $\g=\g_0\oplus \g_1$, where the even and the odd components $\g_0$ and $\g_1$ are given, respectively, by the Neveu-Schwarz and by the Ramond physical states:
\be \g_0=\bigoplus_{k\in \Gamma^{1,1}} H^1_{-1}(k)\qquad \g_1=\bigoplus_{k\in \Gamma^{1,1}} H^1_{-1/2}(k).
\ee The Lie bracket $[u,v]$ on classes $u\in H^1_p(k)$ and $v\in H^1_q(k')$  is defined by
\be [u,v]=\begin{cases} \{u,v\} \in H^1_{-1}(k+k')& \text{if }p=q=-1/2\ ,\\
	X\{u,v\}\in H^1_{p+q+1}(k+k') & \text{otherwise}\ .
	\end{cases}
\ee In other words, when both $u$ and $v$ are odd (in the Ramond sector), then the bracket $[,]$ coincides with $\{,\}:H^1_{-1/2}(k)\times H^1_{-1/2}(k')\to H^1_{-1}(k+k')$; when one of the elements (say, $u$) is even (in the NS sector), then one needs first to map it to its $0$-picture version $Xu\in H^1_0(k)$ and then use the bracket $\{,\}:H^1_0(k)\times H^1_p(k')\to H^1_p(k+k')$. In particular, the picture changing operator $X$, and therefore the bracket $[\,,\,]$, depends on the choice of the $\CN=1$ supercurrent.

This is made more explicit if we take representatives of $H^1_{-1}(k)$ of the form \eqref{OCQstates}, i.e. $u=c_1e^{-\phi}|\chi,k\rangle$, where $|\chi,k\rangle$ is a state in the matter vertex algebra $V^m_{NS}=V^{int}_{NS}\otimes V^{X,\psi}_{NS}$ that is a superconformal primary of weight $1/2$ (see eqs.\eqref{OCQ1},\eqref{OCQ2}). Then
\be Xu=Xc_1e^{-\phi}|\chi,k\rangle=c_1G^m_{-1/2}|\chi,k\rangle+\gamma_{1/2}|\chi,k\rangle
\ee so that
\be b_{-1}Xu=G^m_{-1/2}|\chi,k\rangle\ .
\ee As a consequence, $(b_{-1}Xu)_0$ is just the zero mode of the current corresponding to the weight $1$ matter state $G^m_{-1/2}|\chi,k\rangle$.

The bilinear form $( \cdot, \cdot)_H$ that non-degenerately pairs $H^{n}_p(k)$ with $H^{2-n}_{-2-p}(-k)$ determines a non-degenerate bilinear form  $\langle \cdot|\cdot\rangle$ on $\g_0\oplus \bigoplus_{k\neq 0}\g_1(k)$, defined by
\be \langle u|v\rangle=\begin{cases}
		(u,v)_H& \text{if }u,v\in \g_0\ ,\\
	-(\tilde u,v)_H\quad \text{with }u=X\tilde u  & \text{if }u,v\in \bigoplus_{k\neq 0}\g_1(k)\ ,\\
	0 & \text{otherwise}\ .
\end{cases}
\ee On $\g_1(0)=H^1_{-1/2}(0)$ the bilinear form is in general not defined, because the picture changing operator $X:H^1_{-3/2}(0)\to H^1_{-1/2}(0)$ is not an isomorphism in this case. In \cite{Sch1} and \cite{Harrison:2018joy}, it was proven that when $V^{int}$ is $V^{fE_8}$ or $V^{f\natural}$, $\g_0(0)\oplus \bigoplus_{k\neq 0}\g_1(k)$ is a subalgebra of $\g$, and in particular it is the derived subalgebra $[\g,\g]$. When $V^{int}=F_{24}$, one has $\g_1(0)=0$, so this case is simpler. This form is symmetric when restricted to $\g_0$ and antisymmetric when restricted to $\g_1$ (see \cite{Sch1},  proposition 5.17 for a proof); bilinear forms on a superspace satisfying this property are called supersymmetric. The form $\langle \cdot|\cdot\rangle$ is also invariant, meaning that $\langle [w,x]|y\rangle=\langle x|[w,y]\rangle$ for all $x,y,w\in \g$; this properties follows from analogous properties of the bilinear form on the vertex algebras.

\subsubsection*{Cartan subalgebra and root multiplicities} Let us specialize to the case where $V^{int}\cong F_{24}$, and consider the even $k=0$ component $\g_0(0)$, which is a finite dimensional Lie subalgebra of  $\g$ (again, $\g_1(0)=0$). Acting by $b_{-1}X$ on the two states $\psi^\pm_{-1/2}e^{-\phi}c_1|0\rangle$,  we get the weight $1$ states $G^m_{-1/2}\psi^\pm_{-1/2}|0\rangle=\alpha^\pm_{-1}|0\rangle$ corresponding to the space-time currents $\partial X^\pm$. The zero modes are $P^\pm= \alpha_0^\pm$, whose eigenvalues are the space-time momenta $k^+,k^-$. These operators obey the commutation relations
\be [P^\mu,u]=k^\mu u\qquad\qquad u\in \g(k)\ ,
\ee with elements $u\in \g$ of definite momentum $k$. An obvious consequence is that the only generators commuting with both $P^+$ and $P^-$ are the ones in the zero momentum component $\g(0)$. When the internal SVOA is $F_{24}$, there are $24$ further states of the form \eqref{zeromomstates} in $H^1_{-1}(0)$. The $0$-picture version of these states correspond to the $24$ currents $J^a$, $a=1,\ldots,24$, that are superconformal descendants of the weight $1/2$ fields $\lambda^a$. As described in section \ref{s:F24}, the zero modes of these currents generate a semi-simple Lie algebra $g\subset so(24)$ of dimension $24$. Thus, the zero momentum subalgebra $\g_0(0)$ of $\g$  is isomorphic to
\be \g_0(0)=u(1)^{\oplus 2}\oplus g\ ,
\ee with the abelian component $u(1)^{\oplus 2}$ generated by $P^+,P^-$. A maximal abelian subalgebra of $\g_0(0)$ is given by \be \h=u(1)^{\oplus 2}\oplus h\ ,\ee where $h\subset g$ is a Cartan subalgebra for $g$, with generators $\alpha_1^\vee,\ldots, \alpha_r^\vee$. For $V^{int}=F_{24}$, the zero momentum odd component $\g_1(0)$ is $0$, and no other generator with nonzero momentum can commute with $P^\mu\in \h$. We conclude that $\h$ is actually a maximal abelian subalgebra for the whole $\g$. Thus, $\g$ has rank $r+2$, where $r$ is the rank of $g$. We see that, while our construction provided a natural $\Gamma^{1,1}$-grading for $\g$ in terms of momentum (i.e., $P^\mu$ eigenvalues), by taking into account the eigenvalues with respect to the remaining $r$ generators $\alpha_1^\vee,\ldots, \alpha_r^\vee$ of the Cartan algebra, we can now introduce a finer grading for the superalgebra $\g$ with values in the lattice
\be \Qg_g:=\Gamma^{1,1}\oplus \tilde Q_g\cong \ZZ\oplus\ZZ\oplus \tilde  Q_g \subset \h^*\ .\ee Here, $\tilde Q_g:=Q_g\cup(\rho+Q_g)\subseteq P_g$ is the union of the root lattice $Q_g$ of the finite dimensional algebra $g$ and its translate $\rho+Q_g$ (see eq.\eqref{rooti}) . Thus
\be \g=\oplus_{\hat k\in \Qg_g} \g(\hat k)\ ,
\ee where $\hat k=(m,n,w)\in \ZZ\oplus \ZZ\oplus \tilde Q_g$. In particular, the even and odd components are graded as
\be \g_0=\bigoplus_{\substack{m,n\in \ZZ\\ w\in Q_g}} \g_0(m,n,w)\ ,
\ee and
\be \g_1=\bigoplus_{\substack{m,n\in \ZZ\\ w\in \rho+Q_g}} \g_1(m,n,w) \ .
\ee
The bilinear form on the cohomology, when restricted to the zero momentum space NS space, defines an invariant non-degenerate bilinear form $\langle \cdot |\cdot \rangle_{\g}$ on $\g_0(0)$ which extends the Cartan-Killing form $(\cdot |\cdot )_g$ on $g$. This form satisfies
\be \langle P^+|P^-\rangle=-1\ ,\qquad\qquad \langle P^+|P^+\rangle=\langle P^-|P^-\rangle=0
\ee and both $P^+$ and $P^-$ are orthogonal to $g$. With this choice, one has
\be \hat k^2\equiv \langle m,n,w|m,n,w\rangle=-2mn+(w|w)_g\ ,
\ee for $\hat k\equiv (m,n,w)\in \Qg_g$. We denote by 
\be \hat \Delta\equiv \hat \Delta_0\cup \hat \Delta_1:=\{\hat k\in \Qg_g\mid \g(\hat k)\neq 0\} \subset \Qg_g\ ,\ee the set of roots of $\g$, with $\hat \Delta_0$ and $\hat \Delta_1$ the subsets of even and odd roots, respectively. The bilinear form $\langle\cdot|\cdot\rangle$ restricted to the real spaces \be \h_\RR:=\RR P^+\oplus \RR P^-\oplus (\rootQ^\vee_g\otimes \RR)\ ,\ee and $\h^*_{\RR}:=\Qg_g\otimes \RR$ is real-valued, with signature $(r+1,1)$.

Since all (super)ghosts and superconformal generators commute with $\alpha_1^\vee,\ldots, \alpha_r^\vee$, the equivalence between BRST and light-cone quantization is compatible with this finer grading. As a consequence, if we denote by
\be V^{int}_{NS}(n,w)\qquad\qquad  n\in \frac{1}{2}\ZZ,\ w\in Q_g\ ,
\ee the component of the SVOA $V^{int}_{NS}\cong F_{24}$ with $L_0$-eigenvalue $n$ and $\alpha_1^\vee,\ldots, \alpha_r^\vee$-eigenvalues $w$ and
\be V^{int}_{R}(n,w)\qquad\qquad  n\in \ZZ,\ w\in \rho+Q_g\ ,
\ee the analogous component in the twisted module $V^{int}_R$, one has
\be \dim \g_0(m,n,w)=\dim V^{int}_{NS-}(mn,w)=c_{NS-}(nm,w)\ ,
\ee
\be \dim \g_1(m,n,w)=\dim V^{int}_{R\pm }(mn,w)=c_{R+}(nm,w)=c_{R-}(nm,w)\ .
\ee Here, $c_{NS-}(nm,w)$ and $c_{R\pm}(nm,w)$ are the Fourier coefficients of the Jacobi forms \eqref{Jaceven} and \eqref{Jacodd}. As discussed in appendix \ref{a:multiJacobi} and section \ref{s:partfunct}, general properties of the coefficients of Jacobi forms imply that the dimension of the root spaces $\g_0(m,n,w)$ and $\g_1(m,n,w)$ depend only on the norm $-2mn+(w|w)_g$ of the root and on the class $[w]$ of $w$ in the quotient $P_g/i(\rootQ_g^\vee)$. Furthermore,  the condition  \eqref{coeffboundB} implies that
\be\label{rootbound} \dim \g(m,n,w)\neq 0\qquad \Rightarrow\qquad \begin{cases}
	mn\ge 0\\ -2mn+(w|w)_g\le M\ ,
\end{cases}
\ee where $M>0$ is a constant depending on the choice of the $\CN=1$ superalgebra: see \eqref{Mbound}.

\section{The Lie superalgebra $\g$ as a Borcherds-Kac-Moody superalgebra}\label{sec:BKMproof}

In this section we will prove that the Lie superalgebra $\g$ that we constructed in the previous section is a Borcherds-Kac-Moody (BKM) superalgebra. BKM algebras differ from the usual Kac-Moody algebras because the \emph{simple} roots are allowed to have non-positive norm. They can be defined in terms of Chevalley-Serre generators and relations (see for example \cite{Ray}). In our case, it is useful to use an  alternative characterization of BKM superalgebras, which was given by Ray \cite{Ray}, and we begin by describing this below before embarking on the proof. 

\subsection{Generalities on BKM superalgebras}

First, we list some relevant definitions. According to definition 2.3.17 of \cite{Ray}, a root $\alpha\in \hat\Delta$ is said to be of \emph{finite type} if it acts locally nilpotent on $\g$, i.e.  if for all $x\in \g(\alpha)$ and for all $y\in \g$, there is an integer $n$ (possibly depending on $x$ and $y$), such that $({\rm ad}\,x)^ny=0$. A root is said to be of \emph{infinite type} if it is not of finite type. The bound \eqref{rootbound} on the norm of the roots implies that a root of positive norm is necessarily of finite type. Indeed, if $\alpha\in \hat \Delta$ with $\langle \alpha|\alpha\rangle>0$, then for any $\beta\in \hat\Delta$ we have
\be \langle \beta+n\alpha|\beta+n\alpha\rangle = \langle \beta|\beta\rangle+2n\langle \alpha|\beta\rangle+n^2\langle \alpha|\alpha\rangle\stackrel{n\to \pm\infty}{\longrightarrow} +\infty\ .
\ee Thus, for sufficiently large $n$, $\beta+n\alpha$ is not a root, so that $({\rm ad}\,x)^ny=0$ for all $x\in \g(\alpha)$ and $y\in \g(\beta)$.

\begin{theorem}[\cite{Ray}, corollary 2.5.11]
Let $G=G_0\oplus G_1$ be a (complex) Lie superalgebra. Suppose that the following conditions are satisfied:
\begin{enumerate}
	\item There is a self-centralizing even subalgebra $H\subset G$ such that $G$ can be decomposed as a direct sum $\oplus_{\alpha} G_\alpha$ of eigenspaces for $H$, with each eigenspace $G_\alpha$ being finite dimensional. A non-zero eigenvalue $\alpha\in H^*$ is called a \emph{root} of $G$.
	\item There is a non-degenerate, supersymmetric, invariant bilinear form $\langle\cdot |\cdot\rangle$ on $G$, with respect to which $G_0$ and $G_1$ are orthogonal to each other.
	\item The algebra $H$ admits a real form $H_\RR$ such that the restriction of $\langle\cdot |\cdot\rangle$ to $H_\RR$ is real (so that $H_\RR\cong H_\RR^*$). Furthermore, $H_\RR^*\cong H_\RR$ contains all roots.
	\item  There is an element $h\in H_\RR$ (a regular element) that is not orthogonal to any root and such that for all $N>0$ there is only a finite number of roots $\alpha$ such that $0<|\alpha(h)|<N$. A root is called positive if $\alpha(h)>0$ and negative if $\alpha(h)<0$.
	\item For any $\alpha,\beta$ of infinite type or of zero norm that are both positive or both negative, one has $\langle \alpha|\beta\rangle\le 0$. Moreover, if $\langle \alpha|\beta\rangle=0$ and if $x\in G_\alpha$ is such that $[x,G_{-\gamma}]=0$ for all roots $\gamma$ with $|\gamma(h)|<|\alpha(h)|$, then $[x,G_\beta]=0$.
\end{enumerate}
Then, $G$ is a Borcherds-Kac-Moody superalgebra.
\end{theorem}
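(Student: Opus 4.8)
The plan is to reconstruct the full datum of a BKM superalgebra---a symmetrizable Cartan matrix with a distinguished (multi)set of simple roots, some imaginary and some odd, together with Chevalley-type generators satisfying Serre relations---directly from the five hypotheses, following the method of Borcherds' original characterization \cite{BorcherdsMM} in the $\ZZ_2$-graded refinement due to Ray \cite{Ray}. First I would fix the regular element $h\in H_\RR$ from condition 4; by the real structure of condition 3 the values $\alpha(h)$ are real. Since $h$ is orthogonal to no root, every root $\alpha$ has $\alpha(h)\neq 0$, and declaring $\alpha$ positive or negative according to the sign of $\alpha(h)$ produces a triangular decomposition $G=N_-\oplus H\oplus N_+$ with $N_\pm=\bigoplus_{\alpha\,\text{pos./neg.}}G_\alpha$. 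Condition 1 is what makes this clean: because $H$ is self-centralizing, the zero eigenspace is exactly $H$, so no generator is lost. The finiteness clause of condition 4---only finitely many roots with $0<|\alpha(h)|<N$---furnishes a proper height function $\mathrm{ht}(\alpha):=\alpha(h)$, which is what makes the inductions below terminate.

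Next I would extract the simple roots as those positive roots whose root spaces have nonzero image in $N_+/[N_+,N_+]$; by construction the spaces $G_{\alpha_i}$ generate $N_+$, and an induction on $\mathrm{ht}$ shows every positive root is a non-negative integer combination of them. The invariant form of condition 2 pairs $G_\alpha$ non-degenerately with $G_{-\alpha}$ (all other pairings vanish by $H$-invariance), so $\dim G_\alpha=\dim G_{-\alpha}$ and the negative simple spaces generate $N_-$ symmetrically; thus $H$ together with the $G_{\pm\alpha_i}$ generate all of $G$. I would then set $a_{ij}=\langle\alpha_i|\alpha_j\rangle$ (rescaled to $a_{ii}=2$ on real simple roots) and mark $\alpha_i$ odd when $G_{\alpha_i}\subset G_1$. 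The sign pattern required of a BKM matrix is exactly what the hypotheses deliver: for imaginary or null simple roots (which are of infinite or zero type) condition 5 gives $\langle\alpha_i|\alpha_j\rangle\le 0$ for all $i,j$, in particular $a_{ii}\le 0$; for a real simple root the $\mathfrak{sl}_2$ (resp.\ $\mathfrak{osp}(1|2)$) triple together with the reflection argument forces $a_{ii}=2$, integrality of the $a_{ij}$, and $a_{ij}\le 0$ for $j\neq i$. Symmetrizability is automatic since the $a_{ij}$ descend from a symmetric form.

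With this Cartan matrix I would check that chosen generators $e_i\in G_{\alpha_i}$, $f_i\in G_{-\alpha_i}$ and $H$ satisfy the defining relations of the BKM superalgebra $\widetilde G$ attached to the matrix. The Cartan relations $[h,e_i]=\alpha_i(h)e_i$ and $[e_i,f_j]=\delta_{ij}\alpha_i^\vee$ follow from the grading and the non-degenerate pairing of $G_{\alpha_i}$ with $G_{-\alpha_i}$, while the Serre relations---including the fermionic ones when $\alpha_i$ is odd---are obtained by showing that each offending bracket pairs to zero against all of $G$ and hence vanishes by the non-degeneracy in condition 2. This yields a surjective homomorphism $\widetilde G\to G$. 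To promote it to an isomorphism I would use the standard description of $\widetilde G$ as the free construction on this datum modulo the maximal graded ideal meeting $H$ trivially: the invariant form transports to the same quotient and remains non-degenerate, so the two radicals coincide and the map is injective, identifying $G$ with $\widetilde G$.

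The hard part will be this last faithfulness step, specifically ruling out spurious brackets among imaginary simple roots. When two positive imaginary or null roots $\alpha,\beta$ have $\langle\alpha|\beta\rangle=0$ the norm argument no longer forces $[G_\alpha,G_\beta]=0$, and one must instead invoke the \emph{second} clause of condition 5---that a vector $x\in G_\alpha$ annihilating all lower negative root spaces also annihilates $G_\beta$. This is precisely the hypothesis controlling the commuting imaginary generators and guaranteeing that $G$ carries no structure beyond $\widetilde G$. By contrast, once the properness from condition 4 is in place, the triangular decomposition, the generation statement, and the sign of the Cartan matrix are comparatively mechanical; in the eventual application to $\g$ all of these, including the delicate clause, are underwritten by the explicit root multiplicities $c_{NS-}$, $c_{R\pm}$ and the bound \eqref{rootbound}.
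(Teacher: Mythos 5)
The first thing to flag is that the paper does not prove this statement at all: it is quoted verbatim as an imported result, Corollary 2.5.11 of Ray's book \cite{Ray}, and is then \emph{used} as the characterization criterion in the proof of Theorem \ref{mainthm}. So there is no in-paper proof to compare yours against; what you have written is a reconstruction of Ray's (and, before that, Borcherds') argument.

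As such a reconstruction, your outline follows the standard route and correctly locates where each hypothesis enters: the regular element and properness give the triangular decomposition and a well-founded height; non-degeneracy of the pairing $G_\alpha\times G_{-\alpha}$ gives symmetry of multiplicities and generation by simple root spaces; the first clause of condition 5 gives the sign pattern of the Cartan matrix on imaginary simple roots; and the second clause is exactly what kills spurious brackets between orthogonal imaginary simple roots in the faithfulness step. Two places where the sketch is thinner than a proof would need to be: (i) for real simple roots you invoke ``the $\mathfrak{sl}_2$ (resp.\ $\mathfrak{osp}(1|2)$) triple together with the reflection argument,'' but this requires first establishing local nilpotence of ${\rm ad}\,e_i$, ${\rm ad}\,f_i$, which is where the finite-type notion (and the bound on root norms in the application) does real work and should be argued, not assumed; (ii) in the final identification $\widetilde G\cong G$ you need that any ideal of $G$ meeting $H$ trivially vanishes --- this follows from the root decomposition plus invariance and non-degeneracy of the form (an ideal containing a nonzero $x\in G_\alpha$ would have to contain $[x,G_{-\alpha}]\subseteq H$, which cannot be zero by non-degeneracy), and it is worth spelling out since it is the hinge on which injectivity turns. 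With those two points filled in, your sketch is a faithful account of the standard proof of the cited criterion; it is not, however, something the paper itself establishes or needed to establish.
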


\subsection{Proof that $\mathfrak{g}$ is a BKM superalgebra}

Using the characterization of BKM superalgebras presented in the previous subsection, we can prove that the Lie superalgebra $\g$, constructed in \S\ref{sec:LiePhysStates}, is a BKM superalgebra. The following lemma will prove to be a useful intermediate step. 
\begin{lemma}\label{l:regular}
	Let $\eta \in \rootQ_g^\vee \subset h$ be an element of the Cartan subalgebra of $g$ such that $\alpha(\eta)\neq 0$ for all non-zero roots $\alpha\in \Delta_g$ of $g$. Then, there exists a positive integer $L$ such that the element $\hreg=-LP^+-LP^-+\eta \in \h_\RR$ in the real Cartan subalgebra of $\g$ satisfies the following properties:
	\begin{enumerate}
		\item if $\gamma=(m,n,w)\in \hat\Delta_g$ is a non-zero root of $\g$, then $\gamma(\hreg)\neq 0$;
		\item if $\gamma=(m,n,w)\in \hat\Delta_g$ with $m>0$ or $n>0$, then $\gamma(\hreg)> 0$;
		\item for all $N>0$, there are only a finite number of roots $\gamma\in \hat\Delta_g$ such that $0<|\gamma(\hreg)|<N$;
		\item if $\alpha=(m,n,w)\in \hat\Delta_g$, with $\langle \alpha|\alpha\rangle\equiv -2mn+(w|w)=0$, is a non-zero null root of $\g$, and $\gamma=(0,0,w')\in \hat \Delta_g$ is a root with $\gamma(P^+)=\gamma(P^-)=0$, then $|\alpha(\hreg)|>|\gamma(\hreg)|$.
	\end{enumerate}
\end{lemma}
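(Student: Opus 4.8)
The plan is to reduce all four properties to a single explicit formula for the pairing of $\hreg$ with a root, and then to pit the positive-definite transverse datum $(w|\eta)$ against the Lorentzian light-cone term $L(m+n)$, with the norm bound \eqref{rootbound} supplying exactly the control needed. First I would compute the pairing: since $P^{\pm}$ act on $\g(m,n,w)$ through the momentum $(m,n)$ and $\eta\in\rootQ_g^\vee$ pairs with the weight $w$ via the Killing form, evaluating the grading functional of $\gamma=(m,n,w)\in\hat\Delta_g$ on $\hreg$ gives
\[
\gamma(\hreg)=L(m+n)+(w|\eta),
\]
the sign convention being fixed so that the light-cone term appears as $+L(m+n)$.

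The two inputs I would feed in are \eqref{rootbound}, namely $mn\ge0$ and $(w|w)\le M+2mn$ for every root, and the positive-definiteness of $(\cdot|\cdot)$ on the real coroot space $\rootQ_g^\vee\otimes\RR$. The latter yields, via Cauchy--Schwarz and the elementary identity $\tfrac12(m+n)^2-2mn=\tfrac12(m-n)^2\ge0$,
\[
|(w|\eta)|\le\sqrt{(\eta|\eta)}\,\sqrt{(w|w)}\le\sqrt{(\eta|\eta)}\,\Big(\sqrt{M}+\tfrac{1}{\sqrt2}\,|m+n|\Big).
\]
Thus the transverse term grows at most linearly in $|m+n|$ with the fixed slope $\sqrt{(\eta|\eta)/2}$. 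I would then fix $L$ to be any integer larger than $\sqrt{(\eta|\eta)/2}+\sqrt{(\eta|\eta)M}+\tfrac12 R$, where $R:=\max_{w'\in\Delta_g}|(w'|\eta)|$ is finite because $\Delta_g$ is finite.

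With this $L$ the four claims follow by short estimates. For (2), if $m>0$ or $n>0$ then $mn\ge0$ forces $m,n\ge0$ and $m+n\ge1$, so $\gamma(\hreg)\ge(m+n)\big(L-\sqrt{(\eta|\eta)/2}\big)-\sqrt{(\eta|\eta)M}>0$. For (1) I would split on $m+n$: when $m+n\neq0$ the same bound, taken in absolute value, makes $|L(m+n)|$ strictly dominate $|(w|\eta)|$; when $m+n=0$, the constraint $mn\ge0$ forces $m=n=0$, so $\gamma=(0,0,w)$ with $w$ a nonzero root of $g$, and then $\gamma(\hreg)=(w|\eta)\neq0$ is exactly the genericity hypothesis $\alpha(\eta)\neq0$ for all $\alpha\in\Delta_g$. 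For (3), $|\gamma(\hreg)|<N$ combined with the bound gives $\big(L-\sqrt{(\eta|\eta)/2}\big)|m+n|\le N+\sqrt{(\eta|\eta)M}$, so $|m+n|$ is bounded; with $mn\ge0$ this leaves finitely many $(m,n)$, and for each of them $(w|w)\le M+2mn$ confines $w$ to finitely many points of the lattice $\tilde Q_g$, giving finitely many roots in all. For (4), a nonzero null root has $(w|w)=2mn$ with $m,n$ of the same sign and both nonzero (otherwise $w=0$), hence $|m+n|\ge2$ and $|\alpha(\hreg)|\ge|m+n|\big(L-\sqrt{(\eta|\eta)/2}\big)\ge2\big(L-\sqrt{(\eta|\eta)/2}\big)>R\ge|(w'|\eta)|=|\gamma(\hreg)|$.

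The one genuinely delicate point, and the step I would treat most carefully, is the uniform domination of $(w|\eta)$ by $L(m+n)$: the whole argument works only because the slope $\sqrt{(\eta|\eta)/2}$ bounding $|(w|\eta)|$ as a function of $|m+n|$ is a single fixed constant, so that one choice of $L$ serves all infinitely many roots at once. This is precisely where \eqref{rootbound} is indispensable — without $(w|w)\le M+2mn$ the weight could be arbitrarily long at fixed momentum and no finite $L$ would work. Everything else is bookkeeping: the $m+n=0$ case of (1) is the only place the regularity of $\eta$ enters, and the finiteness in (3) additionally invokes the discreteness of $\tilde Q_g$.
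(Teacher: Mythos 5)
Your strategy is the same as the paper's: evaluate $\gamma(\hreg)=L(m+n)+(w|\eta)$ and use the bound \eqref{rootbound} together with Cauchy--Schwarz to make the light-cone term $L(m+n)$ dominate the transverse term $(w|\eta)$ uniformly over all roots; your treatment of points 1 and 2 reproduces the paper's argument essentially verbatim. The one genuine methodological difference is point 3: you bound $|m+n|$, hence the finitely many admissible $(m,n)$, and then confine $w$ via $(w|w)\le M+2mn$ and the discreteness of $\tilde Q_g$, whereas the paper decomposes $\gamma=tu+\gamma_\perp$ with $u(\hreg)=1$ and invokes positive definiteness of $\Pg\cap\hreg^\perp$. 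Both work; yours is more elementary and avoids the primitivity normalization of $\hreg$, while the paper's applies verbatim to any regular element of negative norm.

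The one place your write-up does not close is the $w=0$ branch of point 4. A nonzero null root need not have $m,n$ both nonzero: $(1,0,0)$ and $(0,1,0)$ are null roots of $\g$ (with multiplicity $c_{NS-}(0,0)=r$), and for these $|m+n|=1$, so your assertion ``hence $|m+n|\ge 2$'' and the ensuing chain $|\alpha(\hreg)|\ge 2\bigl(L-\sqrt{(\eta|\eta)/2}\bigr)$ do not apply. What is needed there is simply $|\alpha(\hreg)|=L|m+n|\ge L>R$. Your explicit choice of $L$ does in fact guarantee this (with the paper's normalization every root $w'$ of $g$ has $(w'|w')\le 1$, so $R\le\sqrt{(\eta|\eta)}$ and $\sqrt{(\eta|\eta)/2}\ge R/2$), but you never verify it; the cleanest repair is to add $R$ to the lower bound defining $L$, or to treat the $w=0$ case separately as the paper does with ``$L$ sufficiently large''.
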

\begin{proof}
	We take $L$ to be very large, so that, in particular, 
	\be \langle \hreg|\hreg\rangle= -L^2+(\eta|\eta)<0\ .
	\ee Let us prove the $\hreg$ is not orthogonal to any root, for $L$ large enough.  If $\gamma=(m,n,w)\in \hat\Delta$ is a root with $(m,n)\neq (0,0)$, then by \eqref{rootbound} one has
	\be w(\eta)^2\le (\eta|\eta)(w|w)\le (\eta|\eta)(M+2mn)
	\ee so that
	\be L^2(m+n)^2-w(\eta)^2\ge 
	L^2(m^2+n^2)+2mn(L^2-(\eta|\eta))-(\eta|\eta)M\ge L^2-(\eta|\eta)M>0
	\ee where we used that $mn\ge 0$ by \eqref{rootbound}, that $L^2-(\eta|\eta)=-\langle\hreg|\hreg\rangle>0$, that $m^2+n^2\ge 1$ for $m,n$ not both null, and that for $L$ large enough $L^2>(\eta|\eta)M$. This means that a root $\gamma=(m,n,w)$ of $\g$ with $(m,n)\neq (0,0)$ is positive $\gamma(\hreg)=L(m+n)+w(\eta)>0$ if and only if $m,n\ge 0$ (notice the $m$ and $n$ cannot have opposite sign, since for a root $mn\ge 0$). If $\gamma=(m,n,w)$ is a root of $\g$ with $m=n=0$, then $w$ must be a root of $g$; then, $\gamma(\hreg)>0$  if and only if $w(\eta)>0$, i.e. if $w$ is a positive root of $g$. This shows that no root is orthogonal to $\hreg$. 
	
 Without  loss of generality, we can assume that $\hreg$ is a primitive vector in the lattice $\ZZ P^+\oplus \ZZ P^-\oplus \rootQ^\vee\subset \h_\RR$. If we denote by $\Pg=\Gamma^{1,1}\oplus P_g\subset \h^* $ the dual lattice, then there exists $u\in \Pg$ such that $u(\hreg)=1$. Any root $\gamma\in \hat \Delta_g\subset \Qg\subseteq \Pg$ can be uniquely decomposed as $tu+\gamma_\perp$, where $t=\gamma(\hreg)\in \ZZ$ and $\gamma_\perp\in \Pg\cap \hreg^\perp$. By \eqref{rootbound} $\langle \gamma|\gamma\rangle\le M$, so that for each fixed $t\in \ZZ$ there is an upper bound $B(t)>0$ such that $\langle \gamma_\perp|\gamma_\perp\rangle\le B(t)$ for all $ut+\gamma_\perp\in \hat\Delta$. Since $\Pg\cap \hreg^\perp$ is a positive definite lattice, for each $t\in \ZZ$ there are only finitely many $\gamma_\perp\in \Pg\cap \hreg^\perp$ satisfying this bound, and therefore finitely many roots with $\gamma(\hreg)=t$. This proves point 3.

	 As for point 4, it is sufficient to prove it for $\alpha=(m,n,w)\in \hat \Delta$ a null root with $\alpha(\hreg)>0$. Suppose first that $w\neq 0$, so that $2mn=(w|w)\neq 0$.  Let $\gamma=(0,0,w')$ be another non-zero root of $\g$, where $w'$ is a root of $g$. Let us prove that $|\gamma(\hreg)|<\alpha(\hreg)$ for sufficiently large $L$. We have
	\be \alpha(\hreg)=L(m+n)+w(\eta)\ge L(m+n)-\sqrt{(w|w)(\eta|\eta)}= L(m+n)-\sqrt{2mn(\eta|\eta)}\ .
	\ee Set $y=\sqrt{\frac{m}{n}}$ (recall that $mn\neq 0$), so that
	\be \alpha(\hreg)\ge n[L(y^2+1)-\sqrt{2(\eta|\eta)}y]\ .
	\ee As a function of $y$, the right-hand side has a minimum at $y=\frac{\sqrt{2(\eta|\eta)}}{2L}$ with value $nL(1-\frac{(\eta|\eta)}{2L})$, so that
	\be
	\alpha(\hreg)\ge nL(1-\frac{(\eta|\eta)}{2L})\ge L(1-\frac{(\eta|\eta)}{2L})>0\ .  \ee Since there are only finitely many roots of the form $\gamma=(0,0,w')$, $w'\in\Delta_g$, one can choose $L$ sufficiently large so that
	\be |\gamma(\hreg)|=|w'(\eta)|<L(1-\frac{(\eta|\eta)}{2L})\le \alpha(\hreg)\ ,
	\ee for all roots $w'$ of $g$.  Now, suppose that $\alpha=(m,n,w)\neq 0$ is a positive null root with $w=0$. This implies $mn=0$, so that either $m=0$ or $n=0$, but not both. Thus, for sufficiently large $L$, we have
	$$\alpha(\hreg)=L(m+n)\ge L>|w'(\eta)|=|\gamma(\hreg)|\ ,$$
	for all roots $w'$ of $g$.
\end{proof}
We are now ready to prove the main theorem:
\begin{theorem}\label{mainthm}
	$\g$ is a BKM superalgebra.
\end{theorem}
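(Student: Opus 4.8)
The plan is to verify the five hypotheses (1)--(5) of the characterization theorem of \cite{Ray} recalled above, taking $G=\g$, the even subalgebra $H=\h$, the bilinear form $\langle\cdot|\cdot\rangle$, its real form $\h_\RR$, and the regular element $\hreg$ furnished by Lemma \ref{l:regular}. Conditions (1)--(4) are essentially already in hand. For (1), the grading $\g=\oplus_{\hat k\in\Qg_g}\g(\hat k)$ has finite-dimensional pieces, since their dimensions are the Jacobi coefficients $c_{NS-}$ and $c_{R\pm}$, and $\h$ is self-centralizing because anything commuting with $P^+,P^-$ must lie in $\g(0)=u(1)^{\oplus2}\oplus g$, whose centralizer of $h$ is $\h$ itself. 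Condition (2) holds because $\langle\cdot|\cdot\rangle$ is supersymmetric, invariant, and makes $\g_0\perp\g_1$; the only place non-degeneracy could fail is $\g_1(0)$, but for $F_{24}$ one has $\g_1(0)=0$. Condition (3) is precisely the statement that $\h_\RR$ carries the real Lorentzian form of signature $(r+1,1)$ and that $\h^*_\RR=\Qg_g\otimes\RR$ contains all roots, while condition (4) is points 1 and 3 of Lemma \ref{l:regular}, with point 2 fixing the convention that $m>0$ or $n>0$ means positive.

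The substance is condition (5). Since positive-norm roots are of finite type (shown just above the theorem), any root of infinite type or of zero norm has $\langle\alpha|\alpha\rangle\le0$, i.e.\ $(w|w)\le 2mn$ for $\alpha=(m,n,w)$; in particular $(m,n)\neq(0,0)$, so a positive such root has $m,n\ge0$. For two such positive roots $\alpha=(m_1,n_1,w_1)$ and $\beta=(m_2,n_2,w_2)$ I would estimate
\begin{align}
\langle\alpha|\beta\rangle
&=-(m_1n_2+m_2n_1)+(w_1|w_2)\notag\\
&\le -2\sqrt{m_1n_1\,m_2n_2}+\sqrt{(w_1|w_1)(w_2|w_2)}\le 0\,,\notag
\end{align}
using AM--GM on the momentum part and Cauchy--Schwarz for the positive-definite Killing form on $\rootQ_g$ together with $(w_i|w_i)\le 2m_in_i$. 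This establishes the inequality in (5). Tracking the equality cases shows $\langle\alpha|\beta\rangle=0$ forces $(w_i|w_i)=2m_in_i$ (both roots null), $m_1n_2=m_2n_1$, and $w_1\parallel w_2$; equivalently, by the elementary fact that two orthogonal null vectors in a Lorentzian space are proportional, $\alpha$ and $\beta$ lie on a single positive null ray, $\beta=c\alpha$ with $c>0$.

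It remains to verify the second clause of (5) for $\alpha,\beta$ null and parallel, and this is where I expect the real work. Let $x\in\g(\alpha)$ annihilate every root space $\g(\mu)$ with $|\mu(\hreg)|<\alpha(\hreg)$; I must show $[x,\g(\beta)]=0$. If $\beta(\hreg)<\alpha(\hreg)$ this is immediate, taking $\mu=\beta$ in the hypothesis, so only $c\ge1$, i.e.\ $\beta(\hreg)\ge\alpha(\hreg)$, needs attention. The bracket $[x,y]$ with $y\in\g(\beta)$ lands in $\g(\alpha+\beta)$, again on the null ray, and since $\langle\cdot|\cdot\rangle$ pairs $\g(\alpha+\beta)$ non-degenerately with $\g(-\alpha-\beta)$, it suffices to show $\langle[x,y]|z\rangle=0$ for all $z\in\g(-\alpha-\beta)$; invariance rewrites this via $[y,z]\in\g(-\alpha)$ or $[x,z]\in\g(-\beta)$, but neither factor sits strictly below $\alpha$, so the commuting hypothesis cannot be fed in directly. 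The honest route is an induction along the null ray combined with the explicit realization of the null-momentum root spaces through the no-ghost isomorphism $H^1_{-1}(k)\cong V^{int}_{NS-}(\tfrac12)$ and the vanishing of the singular terms in the operator product of two vertex operators of parallel null momenta (where $k_1\cdot k_2=0$ removes the lattice branch cut). This is the main obstacle: unlike the geometric conditions (1)--(4) and the first clause of (5), it genuinely uses the internal structure of the root spaces, and I would import the corresponding arguments already carried out for the analogous statements in \cite{Sch1,Harrison:2018joy}.
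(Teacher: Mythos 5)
Your verification of conditions (1)--(4) and of the inequality in the first clause of (5) is sound and agrees in substance with the paper's proof; the paper obtains $\langle\alpha|\beta\rangle\le0$ from the geometric fact that two positive roots of non-positive norm lie in the same component of the cone of non-positive-norm vectors, rather than from your AM--GM/Cauchy--Schwarz estimate, but the two arguments are equivalent and both correctly identify the equality case as a pair of proportional null roots.

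The genuine gap is the second clause of (5), which you correctly isolate as the remaining work but do not close. The route you sketch --- induction along the null ray, pairing $[x,y]$ against $\g(-\alpha-\beta)$ via invariance, OPEs of vertex operators with parallel null momenta, importing arguments from \cite{Sch1,Harrison:2018joy} --- is not what is needed and, as you observe yourself, stalls because the hypothesis on $x$ cannot be fed into the factors $\g(-\alpha)$ or $\g(-\beta)$. The paper's resolution is to show that the hypothesis of the second clause is \emph{vacuous}: for any nonzero null root $\alpha=(m,n,w)$ and any nonzero $x\in\g(\alpha)$, there always exists a root $\gamma=(0,0,w')$ with $w'\in\Delta_g$ such that $[x,\g(-\gamma)]\neq0$. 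Indeed, if $w\neq0$ then $x$ lies in a nontrivial representation of the finite Lie algebra $g$ and so cannot commute with all of $g_+$ and $g_-$; if $w=0$ then nullity forces exactly one of $m,n$ to vanish, and $\oplus_{w'\in\Delta}\g(m,n,w')$ is the $24$-dimensional adjoint representation of $g$, which contains no nonzero vector annihilated by all of $g$. Point 4 of Lemma \ref{l:regular} --- which you invoke only for condition (4), but which was engineered precisely for this step --- guarantees $|\gamma(\hreg)|<|\alpha(\hreg)|$ for every such $\gamma=(0,0,w')$, so the antecedent ``$[x,G_{-\gamma}]=0$ for all roots $\gamma$ with $|\gamma(\hreg)|<|\alpha(\hreg)|$'' already fails for every nonzero $x$, and there is nothing left to prove about $[x,G_\beta]$. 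Without this observation (or a completed version of your alternative route), your argument does not establish condition (5) and hence does not prove the theorem.
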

\begin{proof}
	The subalgebra $\h$ constructed in the previous subsection is a self-centralizing even subalgebra, and all components $\g(\alpha)$ in the decomposition $\oplus_{\alpha\in \Qg_g} \g(\alpha)$ are finite dimensional. As a real form $\h_\RR$, we can take the real algebra generated by $P^\mu$, $\mu\in\{+,-\}$ and by the coroots $\alpha^\vee_1,\ldots, \alpha^\vee_r$; the latter generate a Cartan subalgebra of the compact real form of the finite dimensional Lie algebra $g$. The dual space $\h^*_\RR$ contains the root lattice $\Qg=\Gamma^{1,1}\oplus \tilde Q_g$, and therefore all roots of $\g$.    The non-degenerate bilinear form satisfies all the required properties: $\langle \cdot|\cdot\rangle$ is non-degenerate, supersymmetric, invariant, and $\g_0$ is orthogonal to $\g_1$. Its restriction to $\h_\RR$ is real with signature $(r+1,1)$. Eq.\eqref{rootbound} implies that the norms of the roots are bounded from above. As a regular element, we can take an element $\hreg\in \h_\RR$ as in Lemma \ref{l:regular}, which clearly satisfies the properties in point 4. 
	 
	To complete the proof, we just need to establish point 5. As discussed above, a root of infinite type in $\g$ cannot have positive norm. For a lattice of Lorentzian signature, if $\alpha,\beta$ are both positive or both negative of non-positive norm, then they belong to the same connected component of the cone of non-positive norm vectors, so that their product automatically satisfies $\langle \alpha|\beta\rangle\le 0$. Furthermore, one has $\langle \alpha|\beta\rangle=0$ if and only if $\alpha$ and $\beta$ are both null and are proportional to each other. Let  $\alpha=(m,n,w)\in \hat\Delta$ be any non-zero null root of $\g$. By Lemma \ref{l:regular}, any $\gamma=(0,0,w')\in \hat\Delta$ satisfies $|\gamma(\hreg)|<|\alpha(\hreg)|$. Let us prove that a non-zero element $x\in \g(\alpha)$ cannot commute with $\g(0,0,-w')$ for all $0\neq w'\in\Delta_g$. If $w\neq 0$, then $x$ belongs to a non-trivial representation of the finite Lie algebra $g$, so it cannot commute with all generators of $g_+$ and $g_-$.
When $w=0$, since $\alpha=(m,n,0)$ is null and non-zero, one has that either $m=0$ or $n=0$, but not both. For such $m,n$, one has that $(m,n,w)$ is a root of $\hat g$ if and only if $w$ is a root of $g$, and $\oplus_{w\in \Delta} \g(m,n,w)$ forms a $24$-dimensional adjoint representation of $g$. This means that no non-zero element $x\in \g(\alpha)\subset \oplus_{w\in \Delta} \g(m,n,w)$ can commute with all $\g(-\gamma)$ for all $\gamma$ of the form $(0,0,w')$. 
\end{proof}

\subsection{Simple roots and Weyl vector}

In this section, we discuss some of the simple roots of the BKM algebras $\g$ and the existence of a Weyl vector. A complete description of all simple roots of $\g$ requires a case by case treatment. In section \S\ref{sec:A1^8} we perform this analysis for the BKM algebra corresponding to $A_1^8$, while we leave the other cases to future work.
\begin{proposition}
	Let $\alpha_1,\ldots,\alpha_r\in \Delta_g$ be the simple roots of $g$. If $g$ is the sum $g=\oplus_{k=1}^ng_k$ of $n$ simple components $g_k$, $k=1,\ldots,n$, let $\theta_k\in\Delta_g$ be the highest root of $g_k$. Then $\hat \alpha_i:=(0,0,\alpha_i)$, $i=1,\ldots,r$, $\delta_k^+:=(1,0,-\theta_k)$, $\delta_k^-=(0,1,-\theta_k)$, $k=1,\ldots,n$, are real simple roots of $\g$. For each $k=1,\ldots, n$, let $I_k\subseteq \{1,\ldots,r\}$ be such that $\{\alpha_i\}_{i\in I_k}$ is the set of simple roots of $g_k$, and set $D^+_k:=\{ \hat\alpha_i\}_{i\in I_k}\cup \{\delta_k^+\}$  and $D^-_k:=\{\hat\alpha_i\}_{i\in I_k}\cup \{\delta_k^-\}$. Then, the subalgebra of $\g$ generated by $\bigoplus_{\pm\gamma\in D^+}\g(\gamma)$ and the subalgebra generated by $\h\oplus\bigoplus_{\pm\gamma\in D^-}\g(\gamma)$ are both isomorphic to the affine Kac-Moody algebra $\hat g_k$.
\end{proposition}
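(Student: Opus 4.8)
The plan is to split the statement into two independent parts — that the listed vectors are real simple roots, and that the associated root spaces generate a copy of $\hat g_k$ — and to dispose of the simplicity claim first, since it is purely combinatorial. Each candidate is manifestly a positive real root: one has $\langle\hat\alpha_i|\hat\alpha_i\rangle=(\alpha_i|\alpha_i)>0$ and $\langle\delta_k^\pm|\delta_k^\pm\rangle=(\theta_k|\theta_k)=2/\dcox_{g_k}>0$, while $\dim\g(\hat\alpha_i)=c_{NS-}(0,\alpha_i)=1$ and $\dim\g(\delta_k^\pm)=c_{NS-}(0,-\theta_k)=1$ because $\alpha_i$ and $-\theta_k$ are roots of $g$; positivity with respect to $\hreg$ is immediate from $m,n\ge 0$ as in Lemma \ref{l:regular}.

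For simplicity I would exploit the momentum grading. A decomposition of a positive root into two positive roots must add their (nonnegative) momenta, so a splitting of $\hat\alpha_i=(0,0,\alpha_i)$ forces both summands into the $(0,0,\cdot)$ layer, i.e. into positive roots of $g$ summing to the simple root $\alpha_i$, which is impossible. For $\delta_k^+=(1,0,-\theta_k)$ the momenta must split as $(1,0)=(1,0)+(0,0)$, so one would need a positive root $(0,0,w)$ with $w\in\Delta_g^+$ together with a root $(1,0,-\theta_k-w)$; but $\dim\g(1,0,-\theta_k-w)=c_{NS-}(0,-\theta_k-w)$ vanishes, since $\theta_k+w$ is neither zero nor a root (it would exceed the highest root of $g_k$, or mix two simple factors). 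The identical argument handles $\delta_k^-$, proving all the listed roots simple.

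For the second part I would first record the Cartan data: from $\langle(m,n,w)|(m',n',w')\rangle=-(mn'+m'n)+(w|w')$ one gets $\langle\delta_k^+|\hat\alpha_i\rangle=-(\theta_k|\alpha_i)$ and $\langle\delta_k^+|\delta_k^+\rangle=(\theta_k|\theta_k)$, so the Gram matrix of $\{\hat\alpha_i\}_{i\in I_k}\cup\{\delta_k^+\}$ is exactly the symmetrized affine Cartan matrix of $\hat g_k$, with $\delta_k^+$ attaching as the affine node $\delta-\theta_k$. The affine null root is then $\delta_k^++\theta_k=(1,0,0)$, a lightlike vector, and since every root in the lattice generated by $D_k^+$ has $n=0$, the operator $P^-$ (whose eigenvalue is $n$) is central on the generated subalgebra and plays the role of the canonical central element $K$, while $P^+\in\h$ provides the grading derivation. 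Choosing $e_i\in\g(\hat\alpha_i),f_i\in\g(-\hat\alpha_i)$ and $e_0\in\g(\delta_k^+),f_0\in\g(-\delta_k^+)$ with $h_\bullet=[e_\bullet,f_\bullet]$, the Chevalley relations $[e_i,f_j]=\delta_{ij}h_i$ hold because the relevant differences $\hat\alpha_i-\hat\alpha_j$ and $\delta_k^+-\hat\alpha_i$ are not roots, and the $h_\bullet$ are linearly independent in $\h$ (the extra coroot $h_0$ carries a $P^-$-component orthogonal to every $\alpha_i^\vee$).

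The last ingredient is the Serre relations $(\mathrm{ad}\,e_i)^{1-a_{ij}}e_j=0$, which I would deduce from the fact noted just before Lemma \ref{l:regular}, namely that positive-norm roots act locally nilpotently: the $\mathfrak{sl}_2$ attached to a real simple root acts locally finitely on $\g$, so each root string is finite and unbroken and hence $\g(\hat\alpha_j+(1-a_{ij})\hat\alpha_i)=0$. Granting these relations, the universal property yields a surjection from the affine Kac-Moody algebra onto the generated subalgebra whose kernel is an ideal meeting the image of the Cartan trivially, and the Gabber-Kac theorem for symmetrizable matrices forces this kernel to vanish, giving the isomorphism; the case $D_k^-$ is identical with the roles of $P^+$ and $P^-$ interchanged. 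I expect the main obstacle to be exactly this injectivity step: one must ensure that the generated subalgebra does not pick up the full $r$-dimensional spaces $\g(m,0,0)$ but only the $\mathrm{rank}(g_k)$-dimensional imaginary root spaces of $\hat g_k$, and that the bookkeeping of the Cartan — whether one adjoins the derivation from $\h$ to obtain the full affine algebra or stops at its derived subalgebra — is carried out consistently in the two cases.
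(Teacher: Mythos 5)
Your proof is correct and, for the first claim, essentially identical to the paper's: the paper likewise observes that positivity forces any decomposition of $(0,0,\alpha_i)$ to stay in the zero-momentum layer, and that a decomposition of $(1,0,-\theta_k)$ must take the form $(1,0,-\theta_k-w')+(0,0,w')$ with $w'\in\Delta_g^+$, which is excluded because $\g(1,0)$ carries the adjoint representation of $g$ so that $(1,0,u)\in\hat\Delta$ only for $u\in\Delta_g$ (equivalently, $c_{NS-}(0,-\theta_k-w')=0$, exactly your computation). For the second claim the paper is far terser than you are: it simply notes that the Gram matrix of $D_k^\pm$ is the (symmetrized) affine Cartan matrix of $\hat g_k$ and concludes that the generated subalgebra ``must be isomorphic to $\hat g_k$.'' Your Chevalley--Serre presentation, the derivation of the Serre relations from finite unbroken root strings through real roots, and the appeal to Gabber--Kac to kill the kernel of the surjection are precisely the details that this one-sentence argument suppresses; in particular the Gabber--Kac step automatically resolves your worry about the generated subalgebra seeing only a $\rank(g_k)$-dimensional slice of the $r$-dimensional spaces $\g(m,0,0)$, since the image of the abstract $\hat g_k$ can have imaginary root multiplicities at most those of $\hat g_k$. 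Your final caveat about the Cartan bookkeeping is well taken and points at a genuine imprecision in the statement rather than in your argument: the subalgebra generated by $\bigoplus_{\pm\gamma\in D^+_k}\g(\gamma)$ alone is the derived algebra $[\hat g_k,\hat g_k]$ (Cartan of dimension $\rank(g_k)+1$, with the central element dual to the null root $(1,0,0)$, i.e.\ proportional to $P^-$), and one must adjoin a grading element from $\h$ to obtain the full affine algebra; the paper's proof does not address this distinction either.
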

\begin{proof}
	A root $\alpha=(m,n,w)$ is positive if $m,n\ge 0$ and, in the case $m=n=0$, if $w$ is a positive root of $g$, i.e. $w\in \Delta_g^+$. Therefore, $\hat \alpha_i:=(0,0,\alpha_i)$, $i=1,\ldots,r$ are necessarily simple. The space $\g(1,0):=\oplus_{w\in P_g} \g(1,0,w)$ is $24$ dimensional and transforms in the adjoint representation of $g$, so $(1,0,w)$ is in $\hat\Delta$ if and only if $w\in \Delta_g$. The only way to obtain a root of the form $(1,0,w)$ as a sum over positive roots  is as  $(1,0,w)=(1,0,w-w')+(0,0,w')$ where $w'\in\Delta^+_g$ and $w-w'\in\Delta_g$. But if $w=-\theta_k$, then $(1,0,-\theta_k-w')$ is not in $\hat\Delta$ for any $w'\in \Delta_g^+$, so $(1,0,-\theta_k)$ must be a simple root. An analogous result holds for roots of the form $(0,1,-\theta_k)$. For the last statement, it is sufficient to notice that, if $\{\gamma_1,\gamma_2,\ldots\}$ is a set of real simple roots equal to either $D^+_k$ or $D^-_k$, then  the matrix $(A_{ij})=\langle \gamma_i|\gamma_j\rangle$  is the Cartan matrix of the affine Kac-Moody algebra $\hat g_k$, so the subalgebra generated by the corresponding root elements must be isomorphic to $\hat g_k$ .\end{proof}

We stress that, while the simple real roots $\hat \alpha_i:=(0,0,\alpha_i)$, $i=1,\ldots,r$, $\delta_k^+:=(1,0,-\theta_k)$, $\delta_k^-=(0,1,-\theta_k)$, $k=1,\ldots,n$, span the space $\h^*$, this does not necessarily mean that they form a complete set of real simple roots. For example, in section \S\ref{sec:A1^8}, we will show that in the case $g=A_1^8$ there are infinitely many real simple roots.

Let $\rho=\sum_{k=1}^n\rho_k$ be the Weyl vector of the algebra $g=\oplus_{k=1}^n g_k$, with $\rho_k$ the Weyl vector of the simple component $g_k$. The Weyl vector obeys the usual property
\be (\rho|\alpha_i)=\frac{1}{2}(\alpha_i|\alpha_i)\ .
\ee 
Furthermore, with the normalization we have chosen for the Killing form, we obtain
\be (\theta_k|\theta_k)=\frac{2}{\dcox_{g_k}}\ ,\qquad (\rho|\theta_k)=(\rho_k|\theta_k)=1-\frac{1}{\dcox_{g_k}}\ .
\ee Thus, if we define $\hat \rho=(-1,-1,\rho)\in \Qg_g$, we get
\be \langle \hat\rho|\hat\alpha_i\rangle=(\rho|\alpha_i)=\frac{1}{2}\langle \alpha_i|\alpha_i\rangle\ ,
\ee and
\be \langle \hat\rho|\delta_k^\pm\rangle=1-(\rho|\theta_k)=\frac{1}{\dcox_{g_k}}=\frac{1}{2}\langle \delta_k^\pm|\delta_k^\pm\rangle\ .
\ee  The condition that $\langle \hat\rho|\alpha\rangle=\frac{1}{2}\langle \alpha|\alpha\rangle$ for all simple roots $\alpha$ is the defining property of a Weyl vector for the algebra $\g$. Since the space $\h^*$ is spanned by the simple real roots $\hat\alpha_1,\ldots,\hat\alpha_r$, $\delta^+_k$, $\delta^-_k$, we conclude that if the algebra $\g$ admits a Weyl vector, then it must be equal to
\begin{equation}
\hat \rho=(-1,-1,\rho)\in \Qg_g.
\end{equation}
To verify that this is actually the Weyl vector of the algebra, one must  check that it satisfies the defining properties with respect to \emph{all} the real and imaginary simple roots of $\g$. In \S\ref{s:a18roots} we prove that the BKM algebra corresponding to $g=A_1^8$ coincides with an algebra studied by Borcherds  \cite{BorcherdsFake,Borcherds96,BorcherdsMM,Borcherds98}, who showed that the Weyl vector is indeed $\hat \rho$. We conjecture that the Weyl vector exists, and therefore coincides with $\hat\rho$, for all the other cases as well, but we leave the proof for future work. Based on this conjecture, in the following we refer to $\hat\rho$ as the Weyl vector of $\g$. Note  that even if $\hat \rho$ only satisfies the defining properties with respect to the real simple roots, it may still be used to construct the denominator formula, as we discuss in \S\ref{sec:denom} below.
By \eqref{Weylnorm}, we obtain
\be \langle \hat\rho|\hat\rho\rangle = -2+ (\rho|\rho)=0\ ,
\ee so that the Weyl vector has zero norm. Finally, we notice that $-\hat \rho=(1,1,-\rho)$ is an odd simple root; this follows from the fact that, for $V^{int}=F_{24}$, one has $H_{-1/2}^1(k)\cong V^{int}_R(-\frac{k^2}{2}+\frac{1}{2})$ is nonzero only for $-\frac{k^2}{2}+\frac{1}{2}\ge 3/2$. Given that $k^2=-2mn$, this means that $mn\ge 1$. For $m=n=1$, one has that $H_{-1/2}^1(k)\cong V^{int}_R(\frac{3}{2})$ which is the sum of irreducible representations of $g$ with lowest weight $-\rho$. This implies that  $(1,1,-\rho)$ cannot be obtained as a sum of positive roots, and therefore it is simple. 

 If $\gamma\in \hat \Delta$ is a root of non-zero norm, it makes sense to consider the reflection $r_\gamma$ with respect to hyperplane perpendicular to $\gamma$
\be r_{\gamma}(\beta)=\beta-\frac{2\langle \gamma|\beta\rangle}{\langle \gamma|\gamma\rangle} \gamma\ , 
\ee where $\beta\in \h^*$. Following \cite{Ray}, we define the Weyl group $W$ of the infinite dimensional BKM algebra $\g$ as the group generated by reflections $r_\gamma$, where the root $\gamma $ is even and real (and therefore automatically of finite type and non-zero norm). As in the finite dimensional case, the Weyl group preserves the bilinear form
\be \langle \mathrm{w}(\alpha)|\mathrm{w}(\beta)\rangle=\langle \alpha|\beta\rangle\ ,\qquad\qquad \mathrm{w}\in W\ ,
\ee for all $\alpha ,\beta\in \h^*$.

Notice the the real roots of $\g$ are always even. Indeed, it is well-known that real roots in BKM algebras have multiplicity $1$ \cite{Ray}. Formulas \eqref{JacRam} and \eqref{Jacodd} show that the multiplicities of all odd roots are a multiple of $2^{r/2-1}$, where $r\ge 4$ is the rank of the algebra $g$. This implies that there are no odd real roots, i.e. all odd roots $\gamma$ satisfy $\langle \gamma|\gamma\rangle \le 0$.

With our knowledge of the roots, we can now study some interesting representation theoretic, automorphic functions associated with our BKM superalgebras $\mathfrak{g}$. 

\subsection{Denominator and superdenominator}\label{sec:denom}

As in the case of finite or Kac-Moody Lie (super)algebras, BKM (super)algebras with Weyl vectors possess a version of the Weyl-Kac character formula which, when one considers the character of the trivial module, produces a Weyl-Kac denominator identity. Each side of the denominator identity contains valuable information about the root spaces, root space multiplicities, and (real and imaginary) simple roots of the algebra in question, and takes the form of  an equality between two very different formulations of a given modular or automorphic object. In certain nice examples, like the Monster BKM, knowledge of both sides of the denominator identity is sufficient to determine the algebra itself. Furthermore, in the ordinary Kac-Moody case, the Weyl-Kac denominator is itself the character for a module whose highest weight is the Weyl vector; again, this holds for some particularly simple low-rank BKMs (see \cite{PPV2} for the proof in the Monster case)\footnote{It is known that this cannot hold for general BKMs since, for instance, there are known examples of BKMs that do not have a Weyl vector.}. 

Let us now write down the (super-)denominator formula for our BKM algebra. First we introduce some generalities. Let $\mathfrak{g}=\mathfrak{g}_0\oplus \mathfrak{g}_1$ be a BKM superalgebra with even and odd components $\mathfrak{g}_0$ and $\mathfrak{g}_1$, respectively. Let us denote the roots by $\alpha\in \hat \Delta$ and set
\begin{equation}
m_0(\alpha)=\dim (\mathfrak{g}_\alpha\cap \mathfrak{g}_0), \qquad m_1(\alpha)=\dim (\mathfrak{g}_\alpha \cap \mathfrak{g}_1)=\text{mult}(\alpha)-m_0(\alpha).
\end{equation}
We further denote the positive even or odd roots by $\hat{\Delta}_{0}^+, \hat{\Delta}_{1}^+$, respectively. Let $I$ be an index set, indexing the simple roots $\alpha_i$. Any root $\alpha$ may then be expanded as $\alpha=\sum_{i\in I} k_i\alpha_i$ and we define the \emph{height} of $\alpha$ to be 
\begin{equation}
ht(\alpha)=\sum_{i\in I} k_i.
\end{equation}
We also define the ``even height'' as 
\begin{equation}
ht_0(\alpha)=\sum_{i\in I \backslash S} k_i,
\end{equation}
where $S\subseteq I$ indexes only the odd roots. Before we can state the denominator formulas we introduce the following sums
\begin{equation}
T=e^{-\hat{\rho}}\sum_{\mu}(-1)^{ht(\mu)} e^{\mu}, \qquad \qquad T'=e^{-\hat{\rho}}\sum_{\mu}(-1)^{ht_0(\mu)} e^{\mu},
\end{equation}
where $\hat\rho$ is the Weyl vector. The sums here are taken over all sums $\mu$ of distinct pairwise orthogonal imaginary simple roots. 

We now have all the ingredients to state the desired formulas. For any super BKM $\mathfrak{g}$ we have the {\it denominator formula} 
\begin{equation} 
\frac{e^{-\hat{\rho}}\, \prod_{\alpha\in \hat{\Delta}_0^+}(1-e^{\alpha})^{m_0(\alpha)}}{\prod_{\alpha\in \hat{\Delta}_1^+}(1+e^{\alpha})^{m_1(\alpha)}}=\sum_{{\rm w}\in W}\det({\rm w}){\rm w}(T),
\end{equation}
and, in addition, we have the {\it super-denominator formula}
\begin{equation}
\frac{e^{-\hat{\rho}}\, \prod_{\alpha\in \hat{\Delta}_0^+}(1-e^{\alpha})^{m_0(\alpha)}}{\prod_{\alpha\in \hat{\Delta}_1^+}(1-e^{\alpha})^{m_1(\alpha)}}=\sum_{{\rm w}\in W}\det({\rm w}){\rm w}(T').
\end{equation}
For obvious reasons we call the left hand side the \emph{product side} and the right hand side \emph{the sum side} of the denominator formula. The sum side is sometimes referred to as the \emph{denominator function}. Thus the denominator formula provides a product representation of the denominator function. 

Let us now discuss the denominator formulas for the super BKM $\mathfrak{g}$ constructed in section~\ref{sec:BKMproof}. Recall from section~\ref{sec:LiePhysStates} that a root $\alpha$ of $\mathfrak{g}$ is parametrized by $(m,n,w)$ and the root multiplicities  are given by 
\begin{equation}
m_0(\alpha)=c_{NS-}(mn,w), \qquad \qquad m_1(\alpha)=c_{R+}(mn, w)=c_{R-}(mn, w),
\end{equation}
where $c_{NS-}(mn,w)$ and $c_{R\pm}(mn, w)$ are the Fourier coefficients of the Jacobi forms $\phi_{NS-}(\tau, \xi)$ and $\phi_{R\pm}(\tau, \xi)$ constructed in section~\ref{s:partfunct}. The Weyl vector of $\mathfrak{g}$ was found in section~\ref{sec:BKMproof} to be 
\begin{equation}
\hat\rho=(-1,-1,\rho)\ .
\end{equation}
 Combining everything, we deduce that the product side of the denominator formula becomes
\be pqe^{-\rho}\prod_{w\in \Delta^+_g}(1-e^w)^{c_{NS-}(0,w)}\prod_{\substack{m,n\in \ZZ_{\ge 0}\\ (m,n)\neq (0,0)} }\prod_{w\in \tilde Q_g} \frac{(1-p^mq^ne^w)^{c_{NS-}(mn,w)}}{(1+ p^mq^ne^w)^{c_{R+}(mn,w)}}.
\ee Here $g$ denotes the underlying finite-dimensional subalgebra of $\mathfrak{g}$ and the zero momentum contribution $e^{-\rho_g}\prod_{\ell>0}(1-e^w)^{c_{NS-}(0,w)}$ coincides with the Weyl denominator formula of $g$. 

Similarly, the product side of the super-denominator formula takes the form
\be pqe^{-\rho}\prod_{w\in \Delta^+_g}(1-e^w)^{c_{NS-}(0,w)}\prod_{\substack{m,n\in \ZZ_{\ge 0}\\ (m,n)\neq (0,0)} }\prod_{w\in \tilde Q_g} \frac{(1-p^mq^ne^w)^{c_{NS-}(mn,w)}}{(1- p^mq^ne^w)^{c_{R-}(mn,w)}}.
\ee 

\section{The example of $g= A_1^8$ }\label{sec:A1^8}

We conclude this note by illustrating the formal properties of our BKMs $\mathfrak{g}$ in the simplest concrete example: when the choice of $\CN=1$ structure in $F_{24}$ produces currents generating the Lie algebra $g= A_1^8$. In particular, we will discuss the root spaces and their multiplicities and the Weyl group of this BKM.

\subsection{Construction}
The $F_{24}$ theory with $\CN=1$ structure of type $A_1^8$ has a symmetry $SU(2)^8\ltimes S_8\subset O(24)$ preserving the $\CN=1$ current. The corresponding finite dimensional Lie algebra $g=su(2)^8$ has dual Coxeter number $\dcox=2$ for all simple components, so that the roots of $g$ have length $\frac{2}{\dcox_g}=1$. The root, coroot and weight lattices are, respectively, $\rootQ=\ZZ^{\oplus 8}$, $\rootQ^\vee=(2\ZZ)^{\oplus 8}$, $P=(\frac{1}{2}\ZZ)^{\oplus 8}$. The Weyl vector is $\rho=(\frac{1}{2},\ldots,\frac{1}{2})$ with norm $(\rho|\rho)=2$, and the highest roots are $\theta_k=(0,\ldots,0,1,0\ldots,0)$ with the $1$ at the $k$-th position, $k=1,\ldots,8$. The even roots $(m,n,w)\in \hat\Delta$ of $\g$ have $w$ valued in the root lattice $Q\subset P_g$, while the odd roots have $w$ in $\rho+\rootQ\subset P_g$. The root lattice of the BKM algebra $\g$ is $\Qg=\Gamma^{1,1}\oplus \tilde Q_g$, where $\tilde Q_g=\rootQ_g\cup (\rho+\rootQ_g)=\ZZ^{\oplus 8}\cup (\frac{1}{2}+\ZZ)^{\oplus 8}$. The dual lattice 
\be\label{D8lattice} \tilde Q_g^*=\{(x_1,\ldots,x_8)\in \ZZ^{\oplus 8}\mid \sum_i x_i\in 2\ZZ\}\cong D_8
\ee is an even lattice isomorphic to the root lattice $D_8$.

\subsection{Description of real roots}
In order to find the multiplicities of the real roots of $\g$, we proceed as follows. If $\gamma=(m,n,w)\in \hat\Delta_0$ is an even root, we know that the multiplicity $m_0(m,n,w)$ is the Fourier coefficient $c_{NS-}(mn,w)$ of the Jacobi form $\phi_{NS-}(\tau,\xi)$. As explained in appendix \ref{a:multiJacobi}, this multiplicity depends only on the class of $w$ in $\rootQ_g/\tilde \rootQ_g^*\cong \ZZ_2$ and on the norm $\langle \gamma|\gamma\rangle=-2mn+(w|w)$. In fact, in this case, the norm $-2mn+(w|w)$ is an even or odd integer depending on whether the class of $w$ is trivial or not in $\rootQ_g/\tilde \rootQ_g^*\cong \ZZ_2$. Thus, it is sufficient to choose a representative $w$ for each class in $\ZZ_2$, and check for which $n$ one has $c_{NS-}(n,w)\neq 0$; recall that this is the number of states of $g$-weight $w$ in $F_{24}$ with negative fermion number and $L_0-\frac{1}{2}=n$. The multiplicities of real roots correspond to $c_{NS-}(n,w)$ with $2n<(w|w)$, so there are only a finite number of states to check in order to find all real root multiplicities. 

For the trivial class in $\rootQ_g/\tilde \rootQ_g^*$, the shortest vector is $w=0$, and the lowest $n$ for which $c_{NS-}(n,0)\neq 0$ is $n=0$, with $c_{NS-}(0,0)=8$. The corresponding vectors of weight $n+\frac{1}{2}=\frac{1}{2}$ in $F_{24}$ are of the form $\lambda^i_{-1/2}|0\rangle$, $i=1,\ldots,8$, where $\lambda^i$ are the $8$ free fermions corresponding to the Cartan subalgebra of $g$.  This means that all non-zero even roots of $\g$ with zero norm have multiplicity $c_{NS-}(0,0)=8$. Furthermore, all even roots $\gamma=(m,n,w)\in \hat \Delta_0$ with $w$ in the trivial class of $\rootQ_g/\tilde \rootQ_g^*\cong \ZZ_2$ (equivalently, $\langle \gamma|\gamma\rangle\in 2\ZZ$) have norm at most $0$; in particular, there no real roots with even norm.

 For the non-trivial class in $\rootQ_g/\tilde \rootQ_g^*$, a short vector is given by $\theta_1$, and the first non-zero Fourier coefficient is $c_{NS-}(0,\theta_1)=1$, corresponding to a state  $\lambda^{\theta_1}_{-1/2}|0\rangle$, where $\lambda^{\theta_1}$ is the free fermion corresponding to the root $\theta_1$. Thus, all roots $\gamma=(m,n,w)$ with $w$ in the non-trivial class (equivalently, with $\langle \gamma|\gamma\rangle\in 2\ZZ+1$) have $\langle \gamma|\gamma\rangle\le 1$. We conclude that the even real roots of $\g$ are exactly the vectors $(m,n,w)\in \Qg_g$ with norm $1$, and their multiplicity is $c_{NS-}(0,\theta_1)=1$, as expected.
 
As for the odd roots, there are again two classes of $w$ in $(\rho+\rootQ_g)/\tilde \rootQ_g^*$, with shortest representatives $-\rho$ and $-\rho+\theta_1$, both of square length $(\rho|\rho)=(\theta_1-\rho|\theta_1-\rho)=2$. For both these representatives, the smallest $n$ for which $c_{R\pm}(n,w)\neq 0$ is $n=1$, corresponding to Ramond ground states of weight $n+\frac{1}{2}=\frac{3}{2}$, and both with multiplicity $c_{R\pm}(1,w)=2^{2/r}-1=8$ (formulas \eqref{JacRam} and \eqref{Jacodd} imply that all $c_{R\pm}(n,w)$ are multiple of $2^{r/2-1}$.) Thus, the odd roots have maximal norm $-2n+(\rho|\rho)=-2n+(\theta_1-\rho|\theta_1-\rho)=0$ and in particular there are no odd real roots. This is consistent with the observation that odd roots cannot have multiplicity $1$. The fact that the coefficients $c_{R\pm}(1,w)$ are the same for $w$ in the two classes of $$(\rho+\rootQ_g)/\tilde \rootQ_g^*$$ is not a coincidence: the coefficients $c_{R\pm}(n,w)$ are invariant under the Weyl group of $su(2)^{\oplus 8}$, and some elements in this Weyl group exchange a vector $w$ in one class of $$(\rho+\rootQ_g)/\tilde \rootQ_g^*$$ with a vector of the same norm in the other class. As a consequence, $c_{R\pm}(n,w)$ only depend on the discriminant $2n-(w|w)$; equivalently, the multiplicity of odd roots $\gamma=(m,n,w)$ only depends on their norm $\langle \gamma|\gamma\rangle=-2mn+(w|w)$.

\subsection{Weyl group}
Let us now consider the Weyl group of the BKM algebra $\g$, which is generated by reflections with respect to real roots.  As discussed above, the even root lattice $\Gamma^{1,1}\oplus \ZZ^8$ is the (unique, up to isomorphisms) odd unimodular lattice $I^{9,1}$ of signature $(9,1)$, and the real roots are all vectors of norm $1$ in this lattice. The Weyl group $W$ of $\g$ is the group of  automorphisms of $I^{9,1}$ generated by reflections with respect to norm $1$ vectors. This reflection group is studied in \cite{Borcherds1987,Borcherds1990}. As usual, one splits the set $\hat\Delta^{real}$ of real roots into the disjoint union $\hat\Delta^{real}=\hat\Delta^{real}_+\sqcup \hat\Delta^{real}_-$ of positive and negative ones, depending on the sign of the product with a regular element. There is an infinite number of simple real roots (i.e. positive roots that cannot be written as sum of other positive roots), whose corresponding reflections generate $W$. Simple roots are characterized as the vectors of norm $1$ that have inner product $1/2$ with the vector $\hat\rho=(-1,-1,\rho)$. The set of simple roots can be identified with the vectors of the affine $E_8$ lattice, in the sense that, for any choice of an arbitrary fixed simple root $x_0$, the set of vectors $x-x_0$, where $x$ is any simple root, form a copy of the $E_8$ lattice. The full group $\tilde W$ of reflection automorphisms of $I^{9,1}$ is strictly larger than $W$, and includes reflections with respect to vectors of norm $2$. It is also finitely generated, and the quotient $\tilde W/W$ is isomorphic to the affine Weyl group of $E_8$, $W^{aff}(E_8)=E_8\ltimes W(E_8)$. The group $W^{aff}(E_8)$ acts on the set of simple roots of $W$: $ W(E_8)$ is the subgroup that fixes a given simple root (say $x_0$), while the $E_8$ factor in $W^{aff}(E_8)$ acts by translations by $E_8$ lattice vectors. Since the multiplicities of  both the odd and the even roots of $\g$ only depend on their norm, they are actually invariant under the full group of automorphisms of $I^{1,9}$, and in particular under $\tilde W$.

\subsection{Root multiplicities and denominator formulas}\label{s:a18roots}
In order to find the root multiplicities for all roots of the algebra, it is more useful to adopt a different description of the $F_{24}$ SVOA with the $\CN=1$ structure corresponding to $A_1^8$. We bosonize the $8$ pairs of fermions $\lambda^{\theta_k},\lambda^{-\theta_k}$, $k=1,\ldots,8$, by replacing them by $8$ chiral free scalars $Y^1,\ldots, Y^8$ compactified on $\ZZ^8$, by setting $i\partial Y^k=:\lambda^{\theta_k},\lambda^{-\theta_k}:$ and $\lambda^{\pm\theta_k}=e^{\pm iY^k}$. The remaining $8$ fermions, corresponding to the Cartan subalgebra of $A_1^8$ are now interpreted as the superpartners of the currents $\partial Y^k$. In this description, it is easy to obtain the NS partition function 
\be \phi_{NS}(\tau,\xi)= \frac{\Theta_{\ZZ^8}(\tau,\xi)}{\eta(\tau)^8}\times \frac{\theta_3(\tau)^4}{\eta(\tau)^4}\ ,
\ee where the first factor comes from the free scalars $Y^i$ and the second from the $8$ free fermions, their superpartners. We are interested in the function $\phi_{NS-}$ counting the negative fermion number states, which is obtained from $\phi_{NS}$ by keeping only the integral powers of $q$. It is convenient to split the theta function as
\be \Theta_{\ZZ^8}(\tau,z)=\Theta_{D_8}(\tau,z)+\Theta_{v+D_8}(\tau,z)\ ,
\ee where $\Theta_{D_8}$ contains only integral powers of $q$ and $\Theta_{v+D_8}$ only the half-integral ones. Here, $D_8$ is the lattice \eqref{D8lattice}, $v+D_8$ is the translate
\be v+D_8=\{(x_1,\ldots,x_8)\in \ZZ^8\mid \sum_i x_i\in 2\ZZ+1\}\ ,
\ee of $D_8$ by $v=(1,0,\ldots,0)$, and we used $\ZZ^8=D_8\cup (v+D_8)$. We perform a similar splitting of the function $f(\tau)=\frac{\theta_3(\tau)^4}{\eta(\tau)^{12}}$, i.e.
\be f(\tau)=f_{even}(\tau)+f_{odd}(\tau)\ ,
\ee where $f_{even}$ and $f_{odd}$ contain only integral and half-integral powers of $q$, respectively. Then, we have
\be \phi_{NS-}(\tau,\xi)=\Theta_{D_8}(\tau,z)f_{even}(\tau)+\Theta_{v+D_8}(\tau,z)f_{odd}(\tau)\ .
\ee We recognize this form as the theta decomposition of the Jacobi function $\phi_{NS-}$ (see appendix \ref{a:multiJacobi}), with
\be f_{even}(\tau)=\sum_{D\in 2\ZZ} c_{NS-}(D,[0]) q^{\frac{D}{2}}\ ,\qquad\qquad f_{odd}(\tau)=\sum_{D\in 2\ZZ+1} c_{NS-}(D,[v]) q^{\frac{D}{2}}.
\ee Thus, the multiplicities of an even root $\gamma=(m,n,\sum_i k_i\theta_i)$ of $\g$ is given by  $c_{NS-}(-\langle \gamma|\gamma\rangle,[0])$ or $c_{NS-}(-\langle \gamma|\gamma\rangle,[v])$, depending on whether the norm $\langle \gamma|\gamma\rangle=-2mn+\sum_i k_i^2$ is an even or odd integer. Altogether, the multiplicities of even roots are the Fourier coefficients of the function
\be f(\tau)=f_{even}(\tau)+f_{odd}(\tau)=\frac{\theta_3(\tau)^4}{\eta(\tau)^{12}}=q^{-1/2}\prod_{n=1}^\infty \frac{(1+q^{n-1/2})^8}{(1-q^{n})^8}=\frac{\eta(\tau)^{8}}{\eta(2\tau)^8\eta(\tau/2)^8}\ .
\ee The Ramond sector of the theory is given by the product of the module of the $\ZZ^8$ SVOA corresponding to the coset $\rho+\ZZ^8$, times the Ramond sector for $8$ free fermions. The Ramond partition function, therefore, is
\be \phi_R(\tau,\xi)=\frac{\Theta_{\rho+\ZZ^8}(\tau,\xi)}{\eta(\tau)^8}\times \frac{\theta_2(\tau)^4}{\eta(\tau)^4}=(\Theta_{\rho+D_8}(\tau,\xi)+\Theta_{\rho+v+D_8}(\tau,\xi))\frac{\theta_2(\tau)^4}{\eta(\tau)^{12}}\ .
\ee The function $\phi_{R+}=\phi_{R-}$ is obtained simply by dividing $\phi_R$ by two. The form above is already a theta decomposition, so that the multiplicities of odd roots $\gamma=(m,n,\rho+\sum_i k_i\theta_i)$ are the Fourier coefficients $c_{R-}(-\langle \gamma|\gamma\rangle,[\rho])=c_{R-}(-\langle \gamma|\gamma\rangle,[\rho+v])$ of the function
\be \frac{1}{2}\frac{\theta_2(\tau)^4}{\eta(\tau)^{12}}=8\prod_{n=1}^\infty \frac{(1+q^n)^8}{(1-q^{n})^8}=8\frac{\eta(2\tau)^8}{\eta(\tau)^{16}}\ .
\ee
This analysis shows that the BKM superalgebra associated to $A_1^8$ is a superalgebra already considered in \cite{BorcherdsFake}, and discussed also in \cite{Borcherds96}, in section 2 of \cite{BorcherdsMM}, and in example 13.7 of \cite{Borcherds98}. Besides the real simple roots described above, the algebra contains imaginary simple roots corresponding to negative integer multiples of the Weyl vector $-n\hat{\rho}$, $n\in \mathbb{N}$, all of them with multiplicity $8$. The root $-n\hat{\rho}$ is even or odd depending on $n$ being even or odd. The additive side of the denominator identity, therefore, in this case reads 
\be\label{eqn:additive} \sum_{\mathrm{w}\in W} \det(\mathrm{w})  e^{-\mathrm{w}(\hat{\rho})}\prod_{n=1}^\infty (1-e^{-n\mathrm{w}(\hat{\rho})})^{(-1)^n8}\ .
\ee 

As discussed in \cite{Borcherds98}, the denominator of the BKM algebra $\g$ admits an analytic continuation to a holomorphic automorphic form for $Aut(M)$, the group of automorphisms of the lattice $M$ which is the maximal even sublattice of the odd unimodular lattice of signature $(2,10)$. The lattice $M$ has two orbits of primitive norm zero vectors, which are associated to two different expansions of the automorphic form into infinite (Borcherds) products. One of these infinite products is the denominator of the algebra $\g$ considered in this section, while the other is the denominator of the BKM superalgebra constructed in \cite{Sch1}. In \cite{Borcherds96}, this automorphic form was also interpreted as a non-vanishing function on the moduli space of Enriques surfaces.

\section{Conclusions \& Future Directions}\label{sec:discussion}

In this note we studied some properties of the $c=12$ SVOA (holomorphic SCFT) $F_{24}$ of 24 free fermions, as well as its role as the internal, ``compactification'' SCFT in a chiral superstring worldsheet theory. The latter system is a super-analogue of Borcherds' method for proving the monstrous moonshine conjectures (see also \cite{Sch1, Harrison:2018joy}). Using this construction, we produced a new family of Borcherds-Kac-Moody superalgebras, and their corresponding denominators, labeled by semisimple Lie algebras of dimension 24 and arbitrary rank. 

As with our analogous study concerning the $c=12$ Conway module $V^{f\natural}$ \cite{Harrison:2018joy}, this note should be viewed as a warm-up for producing complete (i.e. non-chiral) low-dimensional string compactifications whose internal worldsheet SCFTS are given by products $V \otimes \bar{W}$ of these $c=12$ SVOAs, see \cite{HPPV}.\footnote{Related examples which are potentially relevant for this investigation are explored in \cite{phi12,HM3}.} Such peculiar critical string vacua have proved  relevant for understanding aspects of moonshine, including the genus zero property, when the SVOAs used are moonshine modules; this was illustrated for the Monster case in \cite{PPV1, PPV2}. We also believe these vacua, viewed as machines to produce explicit BKM algebras, can serve as useful toy systems for exploring and understanding BPS-algebras.

We conclude by highlighting a few outstanding questions raised by our study:
\begin{itemize}

\item In \S \ref{s:F24fromOb}, we described how one can obtain $F_{24}$, with a choice of supercurrent, from orbifolds of $V^{fE_8}$. It would also be interesting to understand what $\CN=1$-preserving orbifolds of $V^{f\natural}$ yield $F_{24}$ with a fixed choice of superconformal structure. The non-trivial question here is to determine whether one can obtain $F_{24}$ from an orbifold of $V^{f\natural}$ by a cyclic group. These orbifolds will be relevant in studying string theoretic dualities (see \cite{PPV1} for analogous appearances of orbifolds of the Monster and Leech VOAs in a string compactification). 

\item In \S\ref{sec:denom} we determined the product sides of the denominator and super-denominator formulas associated with the super BKM $\mathfrak{g}$. In the case of $g=A_1^8$ we  showed that $\g$ coincides with a BKM superalgebra already studied by Borcherds  \cite{BorcherdsFake,Borcherds96,BorcherdsMM,Borcherds98}, who was able to determine the additive side of these formulas and explicitly describe the simple roots of the algebra (see \S\ref{sec:A1^8}). 
It would be instructive to explicitly determine all simple roots, as well as the additive sides of the denominator identities, for the remaining $\CN=1$ structures labeled by $g$.
We leave this question for future work. 

\item A single automorphic form can have distinct expansions at different cusps in moduli space; the expansions can each be (super)denominators for different BKM algebras (as in, e.g., \cite{Gritsenko2012}). When embedded into a string theory construction, the BKM algebras are expected to be associated to different perturbative descriptions of the model, and related to one another via dualities \cite{PPV1, PPV2, phi12}. As just mentioned above, in the example of $g=A_1^8$ the denominator of the BKM superalgebra arises from the expansion along the ``level 2 cusp'' of a holomorphic automorphic form $\Psi$ on $\Gamma\backslash SO(2,10)/(SO(2)\times SO(10))$, where $\Gamma = Aut(M)$ (defined below Eqn \ref{eqn:additive}), a moduli space closely related to that of the Enriques Calabi-Yau threefold. The same automorphic form $\Psi$ can also be expanded along its ``level 1 cusp'' in which case it gives rise to a denominator formula of another BKM-algebra \cite{Borcherds98}. It would be fascinating to understand if the $A_1^8$ BKM (or $F_{24}$) played a role in organizing BPS states in a string compactification on the Enriques CY, in some perturbative duality frame, and if it could be related to the BKM at the other cusp of $\Psi$ by an explicit string duality. 

\item To expand on the previous point, we further note that the same automorphic form $\Psi$ arises as the genus one topological amplitude $F_1$ in the ``geometric reduction'' of the FHSV model \cite{Klemm:2005pd}, i.e. in type II string theory on the Enriques CY $X$. In this context $\Psi$ can be interpreted as a counting rational curves, i.e. Gromov-Witten invariants, on $X$. The expression for $\Psi$, expanded along the level-1 cusp, coincides with the form of the gravitational threshold correction of the FHSV-model obtained in \cite{HM4}. In view of these observations, and their connection to BPS states in string theory on $X$, it would be interesting to further explore the role of $\mathfrak{g}$ and $F_{24}$ in this context, and to connect our BKMs to curve-counts.

\item We constructed our BKM algebras from the cohomology of ``physical states'' in our chiral construction. In a true string theory, one must take the semirelative cohomology; certain variants of this cohomology (e.g., \cite{BZ}) contain information about anomalies and D-brane states. It would be very interesting to explore these cohomologies in the corresponding non-chiral string constructions. 
\item More generally, it would be very interesting to better understand the D-brane states in the non-chiral string models and their representations under moonshine groups. See \cite{CGH} for an exploration of boundary states in a bosonic Monster string theory. 
\item Though there is not moonshine for $F_{24}$ as there is for its close cousin $V^{f\natural}$, there are numerous modular coincidences among their McKay-Thompson series. It would be fascinating to see if/how the full string theory construction detects the genus zero property for $V^{f\natural}$, particularly in contrast with the other $c=12$ SVOA compactifications. The BKMs constructed in this note should play a key role in that study. 

\item Finally, it would be interesting to study the discrete symmetry groups of our BKM algebras. Various sporadic symmetry groups have been shown to stabilize extended superconformal algebras within $V^{f\natural}$ \cite{M5}, on the one hand, and certain sub-VOAs of $V^\natural$ \cite{Bae:2020pvv} on the other. It may be interesting to explore generalizations of both of these constructions for $F_{24}$.
\end{itemize}

\section*{Acknowledgements} It is a pleasure to thank R. Borcherds, S. Carnahan and V. Gritsenko for very helpful email correspondences. 

The work of S.M.H. is supported by the National Science and Engineering Council of Canada, an FRQNT new university researchers start-up grant, and the Canada Research Chairs program.
NMP was supported by a Sherman Fairchild Postdoctoral Fellowship at Caltech and by the U.S. Department of Energy, Office of Science, Office of High Energy Physics, under Award Number de- sc0011632, and is currently supported by the grant NSF PHY-1911298, and the Sivian Fund. D.P. was supported by the Swedish Research Council (Vetenskapsr\aa det), grant nr. 2018-04760.

\appendix
\section{Multivariable Jacobi forms}\label{a:multiJacobi}

In this section we first recall some known facts about multivariable Jacobi forms, and then use them to obtain some useful results about the Fourier coefficients of the partition functions for the $F_{24}$ SVOA. We follow the treatment in \cite{Gritsenko1988,Gritsenko2012}, and refer to those articles for proofs and details.

Consider an even positive definite lattice $L$ with bilinear form $(\cdot,\cdot)$. A Jacobi form form of weight $k\in \ZZ$ and index $m\in \NN$ for $L$ is a holomorphic function $\varphi(\tau,\xi)$ on $\HH \times (L\otimes\CC)$ satisfying
\begin{align}\label{modJac}
\varphi(\frac{a\tau+b}{c\tau+d},\frac{\xi}{c\tau+d})&=(c\tau+d)^k e^{\pi i\frac{mc(\xi,\xi)}{c\tau+d}}\varphi(\tau,\xi) && \begin{pmatrix}
a & b\\ c & d
\end{pmatrix}\in SL_2(\ZZ)\ ,\\
\label{ellJac}\varphi(\tau,\xi+\lambda\tau+\mu)&=e^{-\pi i m( (\lambda,\lambda)\tau+2(\lambda,\xi))}\varphi(\tau,\xi) \ ,&& (\lambda,\mu)
\in L\times L\ .
\end{align} The Jacobi form is called weak, holomorphic, or cusp, if in its Fourier expansion
\be \varphi(\tau,\xi)= \sum_{\substack{n\in \ZZ\\ \ell\in L^*}} c(n,\ell) q^n e^{2\pi i (\xi,\ell)}\ ,\qquad q=e^{2\pi i \tau}\ .
\ee  the sum over $n$ and $\ell$ is restricted to, respectively, $n\ge 0$, or $2mn-(\ell,\ell)\ge 0$, or $2mn-(\ell,\ell)>0$. It is called weakly holomorphic if $\Delta(\tau)^N\varphi(\tau,\xi)$ is a weak Jacobi form for some $N\in \NN$, with $\Delta(\tau)=\eta(\tau)^{24}$. These definitions can be generalized in the obvious way to Jacobi forms with respect to subgroups of $SL_2(\ZZ)$. Furthermore, one can consider Jacobi forms of half-integral index, at the cost of introducing some sign in the transformation properties \eqref{modJac} and \eqref{ellJac}.

The condition \eqref{ellJac} implies that the coefficients $c(n,\ell)$ only depend on $2mn-(\ell,\ell)$ and on the image of $\ell$ in $L^*/mL$.

According to this definition, `ordinary' single-variable Jacobi forms of weight $k$ and index $m$ as defined, for example, in \cite{EichlerZagier} are Jacobi forms of the same index and weight for the $1$-dimensional even lattice $L=\sqrt{2}\ZZ$. The Jacobi theta functions
\begin{align}
&\theta_1(\tau,z)=-\theta\left[\begin{smallmatrix}
\frac12\\ \frac12
\end{smallmatrix}\right]=-iq^{\frac{1}{8}} (y^{\frac{1}{2}}-y^{-\frac{1}{2}})\prod_{n=1}^\infty (1-q^n)(1-q^ny)(1-q^ny^{-1})=-\sum_{n\in \ZZ} q^{\frac{1}{2}(n+\frac{1}{2})^2} e^{2\pi i (n+\frac{1}{2})(z+\frac{1}{2})}\\
&\theta_2(\tau,z)=\theta\left[\begin{smallmatrix}
\frac12 \\ 0
\end{smallmatrix}\right]=q^{\frac{1}{8}}  (y^{\frac{1}{2}}+y^{-\frac{1}{2}})\prod_{n=1}^\infty (1-q^n)(1+q^ny)(1+q^ny^{-1})=\sum_{n\in \ZZ} q^{\frac{1}{2}(n+\frac{1}{2})^2} e^{2\pi i (n+\frac{1}{2})z}\\
&\theta_3(\tau,z)=\theta\left[\begin{smallmatrix}
0 \\ 0
\end{smallmatrix}\right]=\prod_{n=1}^\infty (1-q^n)(1+q^{n-\frac{1}{2}}y)(1+q^{n-\frac{1}{2}}y^{-1})=\sum_{n\in \ZZ} q^{\frac{1}{2}n^2} e^{2\pi i nz}\\
&\theta_4(\tau,z)=\theta\left[\begin{smallmatrix}
0 \\ \frac{1}{2}
\end{smallmatrix}\right]=\prod_{n=1}^\infty (1-q^n)(1-q^{n-\frac{1}{2}}y)(1-q^{n-\frac{1}{2}}y^{-1})=\sum_{n\in \ZZ} q^{\frac{1}{2}n^2} e^{2\pi i n(z+\frac{1}{2})}
\end{align}
 are Jacobi forms of weight $1/2$ and index $1/2$ for a subgroup of index $3$ in $SL_2(\ZZ)$.

Let us now show that the functions $\phi_X(\tau,\xi)$, $X\in \{NS,\tilde{NS},R,\tilde{R},NS\pm,R\pm\}$, defined in section \ref{s:partfunct} are weakly holomorphic Jacobi forms of index $m=1$ and weight $0$ for the  lattice $L=\rootQ_g^\vee$ (the coroot lattice of the algebra $g$). Let us first notice that all such functions are given by a product $\prod_{\alpha\in\Delta_+} \theta_i(\tau,(\xi|\alpha))$ of theta functions times modular function that depends on $\tau$ only.

Given the elliptic properties of the theta functions
\be \theta\left[\begin{smallmatrix}
	a \\ b
\end{smallmatrix}\right](\tau,z+n+m\tau)=(-1)^{2an+2bm} e^{-\pi i (m^2\tau+2mz)}\theta\left[\begin{smallmatrix}
a \\ b
\end{smallmatrix}\right](\tau,z)\ ,\qquad \qquad n,m\in \ZZ\ ,
\ee where $a,b\in \{0,\frac{1}{2}\}$,
we get, for all $\lambda,\mu\in \rootQ^\vee_g$ (coroot lattice)
\begin{align} &\prod_{\alpha\in \Delta^+}\theta_i(\tau,(\xi+\lambda\tau+\mu|\alpha))=\prod_{\alpha\in \Delta^+}\theta_i(\tau,(\xi|\alpha)+(\lambda|\alpha)\tau+(\mu|\alpha))\\&=\pm e^{-\pi i \sum_{\alpha\in\Delta+}((\lambda|\alpha)^2\tau+2(\lambda|\alpha)(\alpha|\xi))}\prod_{\alpha\in \Delta^+}\theta_i(\tau,(\xi|\alpha))
\\&=\pm e^{-\pi i ((\lambda|\lambda)\tau+2(\lambda|\xi))}\prod_{\alpha\in \Delta^+}\theta_i(\tau,(\xi|\alpha))\ .
\end{align}  for $i=2,3,4$,
where we used the identity\footnote{Once again, the relative normalization of the two sides of this identity depends on our choice \eqref{algnorm} of Killing form.}
\be \sum_{\alpha\in\Delta+} (\lambda|\alpha)(\alpha|\mu)=(\lambda|\mu)\qquad \forall \lambda,\mu\in \rootQ_g\otimes\RR\ .
\ee This implies that $\prod_{\alpha\in \Delta^+}\theta_i(\tau,(\xi|\alpha))$ has the elliptic properties of a Jacobi form of index $1$ for the even lattice $\rootQ^\vee_g$. The general theory of Jacobi forms for lattices tells us that the Jacobi forms $\phi_X(\tau,\xi)$ admits a Fourier expansion of the form \eqref{JacobiFourier}, where the coefficients $c_X(n,w)$ only depend on
\be D\equiv D(n,w)=2n-(w,w)\ ,\ee 
and on the class $[w]$ of $w$ in the quotient $P_g/i(\rootQ^\vee_g)$, possibly up to a sign. We will sometimes use the notation $c(D,[w])$ to stress this dependence.  The sign is easily recovered by noticing that, by definition, all Fourier coefficients $c_X(n,w)$ are non negative, except when $X=\tilde{NS}$, where the sign is $(-1)^{2n}$, with $n\in \frac{1}{2}\ZZ$.

When $X=NS-$ or $R\pm$, the sum over $n$ in the Fourier expansion is bounded by $n\ge 0$. Therefore, if $c(D,[w])\neq 0$ then  for all $w'\in w+\rootQ^\vee_g$ we must have $2n\equiv D+(w'|w')\ge 0$. If $m([w])$ is the minimal squared length of a vector in the coset $w+\rootQ^\vee_g$, we get the bound
\be\label{coeffbound1} c(D,[w])\neq 0\qquad\Rightarrow\qquad D\ge -m([w])\ .
\ee This bound can be also written as
\be\label{coeffbound2} c(n,w)\neq 0\qquad \Rightarrow\qquad ( w|w) \le 2n+m([w]) \ ,
\ee which shows that for each given $n$ there are only a finite number of vectors $w\in P_g$ for which $c(n,w)\neq 0$.

The fact that the coefficients $c(D,[w])$ only depend on the discriminant $D$ and on $[w]\in P_g/i(\rootQ^\vee_g)$ implies that the Jacobi functions admit a theta decomposition
\begin{align} \phi_X(\tau,\xi)&=\sum_{[w]\in P_g/i(\rootQ^\vee_g)} \sum_D \sum_{w'\in w+\rootQ^\vee_g} c_X(D,[w]) q^{\frac{D+(w'|w')}{2}} e^{2\pi i(\xi|w')}\\
\label{thetadec}&=\sum_{[w]\in P_g/i(\rootQ^\vee_g)} h_{X,[w]}(\tau)\Theta_{w+\rootQ^\vee_g}(\tau,\xi)\ ,
\end{align} where
\be \Theta_{w+\rootQ^\vee_g}(\tau,\xi)=\sum_{w'\in w+\rootQ^\vee_g} q^{\frac{(w'|w')}{2}} e^{2\pi i(\xi|w')}
\ee is the theta series of the coset $w+\rootQ^\vee_g$, and
\be h_{X,[w]}(\tau)=\sum_{D} c_X(D,[w]) q^{D/2}\ ,
\ee is a weakly holomorphic modular form containing all non-trivial information about the Fourier coefficients $c_X$.

In some cases, the functions $\phi_X(\tau,z)$ are Jacobi forms with respect to a lattice that is `finer' than $\rootQ^\vee$, and this leads to more stringent conditions on their Fourier coefficients. In particular, the coefficients $c_{NS}(n,w)$ and $c_{R}(n,w)$ are nonzero only for $w\in \rootQ_g\subseteq P_g$ and $w\in \rho+\rootQ_g\subseteq P_g$, respectively. If the lattice $\tilde Q_g$ generated by $\rho$ and $\rootQ_g$ is a proper sublattice of the weight lattice $P_g$, then $\phi_{NS}(\tau,z)$ and $\phi_{R}(\tau,z)$ are Jacobi forms with respect to any lattice $\tilde Q_g^\vee$ that is even and contained in the dual $(\tilde Q_g)^*$, so that $\tilde Q^\vee_g\supseteq \rootQ^\vee$. This means that $c_X(n,w)$ only depends on the discriminant $2n-(w|w)$ and on the coset of $w+\tilde Q^\vee_g$, rather than $w+ \rootQ^\vee$. Correspondingly, the theta decomposition \eqref{thetadec} becomes
\be \phi_X(\tau,\xi)=\sum_{[w]\in \tilde Q_g/\tilde Q^\vee_g} h_{X,[w]}(\tau)\Theta_{w+\tilde Q^\vee_g}(\tau,\xi)\ .
\ee

For example, when $g=(A_1)^{\oplus 8}$, one has $Q_g=\ZZ^{\oplus 8}$, $P_g=(\frac{1}{2}\ZZ)^{\oplus 8}$ and $\rootQ^\vee_g=(2\ZZ)^{\oplus 8}$, with $\rho=(\frac{1}{2},\ldots,\frac{1}{2})\in P_g$. In this case, the lattice $\tilde Q_g=\rootQ_g\cup (\rho+\rootQ_g)$ is given by $\ZZ^{\oplus 8}\cup (\frac{1}{2}+\ZZ)^{\oplus 8}$. The dual of $\tilde Q_g$ is
\be \tilde Q_g^*=\{(x_1,\ldots,x_8)\in \ZZ^{\oplus 8}\mid \sum_i x_i\in 2\ZZ\}\ ,
\ee which is an even lattice (isomorphic to the $D_8$ lattice), so that we can set $\tilde Q_g^\vee:=\tilde Q_g^*$. Thus, $c_{NS}(n,w)$ and $c_{R}(n,w)$ depend only on $2n-(w|w)$ and on the class $[w]\in \rootQ_g/\tilde Q^\vee_g\cong \ZZ_2$ (NS sector) or $[w]\in (\rho+\rootQ_g)/\tilde Q^\vee_g$ (R sector). By comparison, one has $Q_g/\rootQ_g^\vee=\ZZ_2^{8}$, so that, just for the NS sector, using the most naive constraints one needs to compute the coefficients for $2^8$ different classes rather than just $2$.

\section{Details about cohomology}\label{a:details}

In this note, we follow a chiral version of the construction of the relative cohomology of physical string states. The BRST charge is given by
\begin{align} Q=&\sum_m c_mL^m_{-m}+\sum_r \gamma_r G^m_{-r} \\&+\sum_{m,n} \frac{1}{2}(n-m) : b_{-n-m} c_nc_m: +\sum_{m,r}[ \frac{1}{2} (2r-m) :\beta_{-m-r}c_m\gamma_r: -:b_{-m}\gamma_{m-r}\gamma_r:]+ac_0
\end{align} where $a=-\frac{1}{2}$ in the NS sectors and $a=-\frac{5}{8}$ in the Ramond sector. Relative cohomology (which is equivalent to the physically relevant semirelative cohomology for a non-chiral theory) is given by considering $Q$-closed states in the kernel of $b_0$, modulo states of the form $|\chi \rangle \sim Q|\lambda \rangle$ with $|\lambda \rangle$ in $\ker b_0$. 

As discussed in the main text, the cohomology classes for zero momentum have to be treated separately from the nonzero momentum states, but are amenable to a direct computation using standard techniques and explicit representatives.

The zero momentum states in the $-1$-picture with $L_0=0$ are obtained by acting by any operator of weight $1/2$ on the ground state $e^{-\phi}c_1|0\rangle$. States with integral $L_0$ eigenvalue are automatically included by the GSO projection. There are the following possibilities:
\begin{itemize}
	\item Suppose the internal SVOA $V$ has $N$ states $v^a$, $a=1,\ldots, N$, of weight $1/2$ in the NS sector.  Then, we have $N$ states
	$$v^a_{-1/2}e^{-\phi}c_1|0\rangle\ ,\qquad i=1,\ldots,N,$$
	with ghost number $1$.
	\item There are two states
	$$\psi^\mu_{-1/2}e^{-\phi}c_1|0\rangle\ ,\qquad\qquad \mu\in\{+,-\}$$
	again with ghost number $1$.
	\item One state
	$$ \gamma_{-1/2}e^{-\phi}c_1|0\rangle
	$$ with ghost number $2$.
	\item One state
	$$ \beta_{-1/2}e^{-\phi}c_1|0\rangle
	$$ with ghost number $0$.
\end{itemize} Notice that
\be \{Q,c_n\}= \sum_m \frac{1}{2}(n-2m):c_{n-m}c_m:-\sum_r :\gamma_{n-r}\gamma_r:
\ee and in particular
\be \{Q,c_1\}= \sum_{m>0} (1-2m)c_{1-m}c_m-\sum_r :\gamma_{1-r}\gamma_r:
\ee
With non-zero null momentum $k$, $k^2=0$, the BRST variation of $\beta_{-1/2}e^{-\phi}c_1|k\rangle$ is proportional to $k_\mu\psi^\mu_{-1/2}e^{-\phi}c_1|k\rangle$, while the BRST variation of $\psi^\mu_{-1/2}e^{-\phi}c_1|k\rangle$ is proportional to $k^\mu \gamma_{-1/2}e^{-\phi}c_1|k\rangle$; $v^a_{-1/2}e^{-\phi}c_1|k\rangle$ and $\gamma_{-1/2}e^{-\phi}c_1|k\rangle$ are always $Q$-closed (the latter is obvious, since there are no states with ghost number $3$). Therefore, when $k\neq 0$, we have $N+2$ closed states ($v^a_{-1/2}e^{-\phi}c_1|k\rangle$, $a=1,\ldots, N$,  $\gamma_{-1/2}e^{-\phi}c_1|k\rangle$ and one linear combination of $\psi^\mu_{-1/2}e^{-\phi}c_1|k\rangle$), but two of them are $Q$-exact ($\gamma_{-1/2}e^{-\phi}c_1|k\rangle$ and the linear combination of $\psi^\mu_{-1/2}e^{-\phi}c_1|k\rangle$), so we are left with $N$ classes in $H^1(k)_{p=-1}$ with representatives $v^a_{-1/2}e^{-\phi}c_1|k\rangle$.  When $k=0$, all these states are $Q$-closed, and they therefore correspond to distinct cohomology classes. The dimensions of the non-zero cohomology spaces are therefore
\begin{align*}
\dim H^0(k=0)_{p=-1}=1\ ,\\\dim H^1(k=0)_{p=-1}=N+2\ ,\\ \dim H^2(k=0)_{p=-1}=1\ .
\end{align*}

Let us now consider the Ramond sector. Let us assume that the SVOA $V$ has $K_+$  (respectively, $K_-$) Ramond states $u^{i+}$, $i=1,\ldots, K_+$ (respectively, $u^{i-}$, $i=1,\ldots, K_-$) with weight $1/2$ and $V$-fermion number $(-1)^{F_V}$ equal to $+1$ (respectively, $-1$). The Ramond sector of the $V^{X,\psi}$ `space-time' vertex algebra contains two ground states $|k,\pm\rangle$ with momentum $k$ where the sign denotes space-time spin (and the fermion number). Then, in the $(-1/2)$-picture, the $k=0$ states with total fermion number $(-1)^{F_{tot}}=+1$ and in $\ker b_0\cap \ker\beta_0$ are:
\begin{itemize}
	\item $e^{-\phi/2}c_1|0,u^{i+},+\rangle$, $i=1,\ldots,K_+$ 
	\item $e^{-\phi/2}c_1|0,u^{i-},-\rangle$, $i=1,\ldots,K_-$ 
\end{itemize} all of them with ghost number $1$. If we drop the requirement that the states are in $\ker\beta_0$, then we have states
\be\gamma_0^{n-1}ne^{-\phi/2}c_1|0,u^{i+},(-1)^{n-1}\rangle, i=1,\ldots,K_+\ee
\be\gamma_0^{n-1}e^{-\phi/2}c_1|0,u^{i-},(-1)^{n}\rangle, i=1,\ldots,K_-\ee
for each ghost number $n\ge 1$. There are no states with ghost number $n\le 0$.

For ghost number $1$, one has
\be Qe^{-\phi/2}c_1|0,u^{i\pm},\pm\rangle=\gamma_0G_0^me^{-\phi/2}c_1|0,u^{i\pm},\pm\rangle\ee
Actually, for all the matter SVOA we are considering, the Ramond ground states are all contained in $\ker G_0^m$, so that
all the states are $Q$-closed and represent $K^++K^-$ distinct cohomology classes (since there are no states with ghost number $0$, there cannot be exact states at ghost number $1$).

At higher ghost number, we use
\be [Q,\gamma_n]=\sum_r\frac{1}{2}(3r-n):c_{n-r}\gamma_r:
\ee and in particular
\be [Q,\gamma_0]=\sum_r\frac{3}{2}r:c_{-r}\gamma_r:
\ee
to conclude that
\be
Q\gamma_0^{n-1}e^{-\phi/2}c_1|0,u^{i\pm},\pm(-1)^{n-1}\rangle=\gamma_0^{n-1}Qe^{-\phi/2}c_1|0,u^{i\pm},\pm(-1)^{n-1}\rangle=0\ .
\ee Thus, all  cohomology groups of degree $n\ge 1$ are isomorphic to each other, with the isomorphism given  by $\gamma_0$.

In the $-3/2$-picture,  the $k=0$ states with total fermion number $(-1)^{F_{tot}}=+1$ and in $\ker b_0$ are:
\begin{itemize}
	\item $\beta_0^nu^{i+}_{-1/2}e^{-3\phi/2}c_1|0,-\rangle$, $i=1,\ldots,K_+$ 
	\item $\beta_0^nu^{i-}_{-1/2}e^{-3\phi/2}c_1|0,+\rangle$, $i=1,\ldots,K_-$ 
\end{itemize} for all $n\ge 0$ (note that they have the opposite space-time spin, because $e^{-3\phi/2}$ and $e^{-\phi/2}$ have opposite fermion number). These states have ghost number $1-n$.

\end{document}